\definecolor{DarkBlue}{rgb}{0.1,0.1,0.5}
\definecolor{DarkGreen}{rgb}{0.1,0.5,0.1}
\newcommand{\extra}[1]{}
\newtheorem{theorem}{Theorem}
\newtheorem{corollary}[theorem]{Corollary}
\newtheorem{lemma}[theorem]{Lemma}
\newtheorem{claim}[theorem]{Claim}
\def\squareforqed{\hbox{\rlap{$\sqcap$}$\sqcup$}}
\def\qed{\ifmmode\squareforqed\else{\unskip\nobreak\hfil
	\penalty50\hskip1em\null\nobreak\hfil\squareforqed
	\parfillskip=0pt\finalhyphendemerits=0\endgraf}\fi}
\def\endenv{\ifmmode\;\else{\unskip\nobreak\hfil
	\penalty50\hskip1em\null\nobreak\hfil\;
	\parfillskip=0pt\finalhyphendemerits=0\endgraf}\fi}
\renewenvironment{proof}{\noindent \textbf{{Proof.~} }}{\qed\medskip}
\newenvironment{proof+}[1]{\noindent \textbf{{Proof #1~} }}{\qed\medskip}
\mathchardef\ordinarycolon\mathcode`\:
\def\vcentcolon{\mathrel{\mathop\ordinarycolon}}
\newcommand{\F}{\mathcal{F}}
\newcommand{\I}{\mathcal{I}}
\newcommand{\NSW}{\mathrm{NSW}}
\DeclareMathOperator*{\argmax}{arg\,max}
\newcommand{\BetaVal}{\frac{1}{64n T^2}}
\newcommand{\EpsVal}{\frac{1}{16 n T}}
\title{\bfseries Nash Welfare Guarantees for Fair and Efficient Coverage}
\author{Siddharth Barman\thanks{Indian Institute of Science. {\tt barman@iisc.ac.in}} \quad Anand Krishna\thanks{Indian Institute of Science. {\tt anandkrishna@iisc.ac.in}} \quad Y.~Narahari\thanks{Indian Institute of Science. {\tt narahari@iisc.ac.in}} \quad Soumyarup Sadhukhan\thanks{Indian Institute of Technology Kanpur. {\tt soumyarup.sadhukhan@gmail.com}}}
\date{}
\begin{document}
\maketitle

\begin{abstract}
We study coverage problems in which, for a set of agents and a given threshold $T$, the goal is to select $T$ subsets (of the agents) that, while satisfying combinatorial constraints, achieve fair and efficient coverage among the agents. In this setting, the valuation of each agent is equated to the number of selected subsets that contain it, plus one. The current work utilizes the Nash social welfare function to quantify the extent of fairness and collective efficiency. We develop a polynomial-time $\left(18 + o(1) \right)$-approximation algorithm for maximizing Nash social welfare in coverage instances. Our algorithm applies to all instances wherein, for the underlying combinatorial constraints, there exists an FPTAS for weight maximization. We complement the algorithmic result by proving that Nash social welfare maximization is {\rm APX}-hard in coverage instances. 
\end{abstract}  

 \section{Introduction}
Coverage problems, with a multitude of variants, are fundamental in theoretical computer science, combinatorics, and operations research. These problems capture numerous resource-allocation applications, such as electricity division~\cite{baghel2022fair,oluwasuji2020solving}, sensor allocation~\cite{marden2008distributed}, program testing \cite{kicillof2007achieving}, and plant location~\cite{cornuejols1977exceptional}. 

Coverage problems entail identifying---for a given threshold $T \in \mathbb{Z}_+$ and a set of elements $[n]$---a collection of subsets, $F_1, F_2, \ldots, F_T \subseteq [n]$, that respect particular combinatorial constraints. Here, the problem objective is specified by considering, for each element $i \in [n]$, the number of selected subsets, $F_t$-s, that contain $i$. For instance, in the classic maximum coverage problem \cite{hochbaum1996approximating}, the subsets, $F_1, \ldots, F_T$, are constrained to be from a given set family and the objective is to maximize the number of elements $i \in [n]$ that are contained in at least one of the $F_t$-s, i.e., maximize $| \cup_t F_t|$. 

We study coverage problems where the ground set corresponds to a population of $n$ agents and the cardinal valuation of each agent $i \in [n]$ depends on the number of selected subsets that contain $i$, i.e., the valuation of $i$ depends on the coverage that $i$ receives across the $F_t$-s. Our overarching goal is to select subsets that, while satisfying combinatorial constraints, achieve fair and efficient coverage among the $n$ agents. 

Before detailing the model, we describe a stylized example that illustrates the applicability of the coverage framework. Consider an electricity grid operator tasked with apportioning electricity for $T$ time periods among a set of $n$ agents (consumers with varying electricity requirements). In a time period $t \in [T]$, the total demand of the $n$ agents can exceed the available supply and, hence, the grid operator must select a subset of agents, $F_t \subseteq [n]$, whose electricity consumption can be fulfilled--agents in the subset $F_t$ receive electricity during the $t$th time period and the remaining agents do not. An important desideratum in such load shedding scenarios is to achieve fairness along with economic efficiency; see the motivating work of Baghel et al.~\cite{baghel2022fair} for a thorough treatment of load shedding and its connections with the fair division literature. Indeed, the coverage framework provides an abstraction for this load shedding environment: for each $t \in T$, the selected subset $F_t$ must satisfy a knapsack constraint\footnote{In particular, the total demand of the agents in $F_t$ should be at most the supply at time period $t$.} and the cardinal preference of each agent $i \in [n]$ is captured by the number of subsets that contain $i$, i.e., the number of time periods that $i$ receives electricity.

\paragraph{Combinatorial Constraints.} We study a coverage framework wherein, for each $t \in [T]$, the $t$th selected subset, $F_t \subseteq [n]$, must belong to a set family $\mathcal{I}_t$, i.e., each $\mathcal{I}_t \subseteq 2^{[n]}$ specifies the possible choices for the $t$th selection. Our results do not require the families $\mathcal{I}_t$-s to be given explicitly as input. Our results hold for any $\mathcal{I}_t$-s that admit a fully polynomial-time approximation scheme (FPTAS) for the weight maximization problem: given weights $w_1, \ldots, w_n \in \mathbb{R}_+$, for the $n$ agents, find $\argmax_{X \in \mathcal{I}_t} \ \sum_{i \in X} w_i$.  

For instance, if each $\I_t$ contains the subsets that satisfy a knapsack constraint, then an FPTAS for weight maximization is known to exist \cite{vazirani2001approximation}; in such a case weight maximization corresponds to the standard knapsack problem.\footnote{Recall that in the electricity division example, the subsets $F_t$-s had to satisfy knapsack constraints.} Furthermore, if the families $\I_t$-s are independent sets of matroids, then one can exactly solve the weight maximization problem in polynomial time \cite{schrijver2003combinatorial}. It is relevant to note that matroids provide an expressive construct for numerous combinatorial constraints, e.g., cardinality and partition constraints. Hence, the coverage framework with matroids provides, by itself, an encompassing class of instances. In addition, our result applies to settings in which $\mathcal{I}_t$-s enforce matching constraints: say, for each $t \in [T]$, we have $k_t \leq n$ slots that have to be integrally assigned. Here, each agent prefers a subset of the slots and requires at most one slot for every $t$. Our result holds under such matching constraints, since here weight maximization can be optimally solved in polynomial time via a maximum-weight matching algorithm.\footnote{One can alternatively consider this matching setting as a transversal matroid.} Also, in instances wherein the sizes of the families $\mathcal{I}_t$-s are polynomially large, weight maximization can be efficiently solved by direct enumeration.  

\paragraph{Agents' Valuations.} As mentioned previously, we address settings in which each agent's valuation depends on the number of times it is covered among the selected subsets $F_t$-s. Specifically, for a solution $\mathcal{F} = (F_1,\ldots, F_T) \in \mathcal{I}_1 \times \ldots \times \mathcal{I}_T$, agent $i$'s valuation is defined as $v_i(\mathcal{F}) \coloneqq | \{t\in [T] : i \in F_t \} | + 1$. Note that the valuation of each agent is smoothed by adding $1$. This smoothing enables us to achieve meaningful (multiplicative) approximation guarantees by shifting the valuations and, hence, the collective welfare away from zero. We also note that valuation smoothing has been considered in prior works in fair division; see, e.g., \cite{fain2018fair}, \cite{fluschnik2019fair}, and \cite{kellapproximations21}. 

\paragraph{Nash Social Welfare.} With the overarching aim of achieving fairness along with economic efficiency in coverage instances, we address the problem of maximizing Nash social welfare (NSW). This welfare function is defined as the geometric mean of agents' valuations and it achieves a balance between the extremes of social welfare (a well-studied objective for economic efficiency) and egalitarian welfare (a prominent fairness notion). NSW stands as a fundamental metric for quantifying the extent of fairness in numerous resource-allocation contexts; indeed, in recent years, NSW has been extensively studied in  the fair division literature; see, e.g., \cite{caragiannis2019unreasonable,barman2018finding,halpern2020fair,rosenfeld2021should,li2022constant} and many references therein.     

Nash social welfare satisfies key fairness axioms, including scale freeness, symmetry, and the Pigou-Dalton transfer principle \cite{Moulin03}. The Pigou-Dalton principle requires that the collective welfare should increase under a bounded transfer of value from a well-off agent $i$ to a worse-off agent $j$. NSW satisfies this principle, since the geometric mean of a more balanced valuation profile (of the $n$ agents) is higher than that of a skewed one. At the same time, if the increase in agent $j$'s value is significantly less than the drop experienced by $i$, then NSW does not increase. That is, NSW prefers solutions\footnote{In the current context, a solution is a collection of $T$ subsets $F_1, \ldots, F_T$ that are contained in the underlying set families $\mathcal{I}_1, \ldots, \mathcal{I}_T$, respectively.} that have reduced inequality and, simultaneously, it accommodates for economic efficiency.    

Furthermore, in various fair division contexts, prior works have shown that a solution that maximizes NSW satisfies additional fairness properties, e.g., \cite{caragiannis2019unreasonable}, \cite{fain2018fair}, \cite{halpern2020fair}, \cite{babaioff2021fair}, and \cite{garg2021fair}. Critically, the fact that Nash optimal solutions bear additional guarantees does not undermine the relevance of finding solutions with as high a Nash social welfare as possible. NSW cardinally ranks the solutions and, conforming to a welfarist perspective, one prefers solutions with higher NSW. Therefore, developing approximation guarantees for NSW maximization is a well-justified objective in and of itself. 

\subsection{Our Results and Techniques} 
We develop a constant-factor approximation algorithm for maximizing Nash social welfare in fair coverage instances. Given a set of $n$ agents and threshold $T \in \mathbb{Z}_+$,  our algorithm (Algorithm \ref{Alg}) computes in polynomial time a solution $\mathcal{F} = (F_1, \ldots, F_T) \in \mathcal{I}_1 \times \ldots \times \mathcal{I}_T$ whose Nash social welfare, $\NSW(\F) = \left( \prod_{i=1}^n v_i(\F) \right)^{\frac{1}{n}}$, is at least $\frac{1}{18 + o(1)}$ times the optimal (Theorem \ref{theorem:main}). As mentioned previously, the algorithm only requires blackbox access to an FPTAS for weight maximization over the set families $\I_1, \ldots, \I_T \subseteq 2^{[n]}$.
 
The algorithm starts with an arbitrary solution and iteratively performs updates till it essentially reaches a local maximum of the log social welfare $\varphi(\F) \coloneqq \sum_{i=1}^n \log \left( v_i(\F) \right)$. Here, for any solution $\mathcal{F} = (F_1,\ldots,F_T)$, a local update corresponds to replacing---for some $\tau\in [T]$---the subset $F_\tau$ with some other subset $A_\tau \in \mathcal{I}_\tau$. The algorithm performs the local updates by invoking, as a subroutine, the FPTAS for weight maximization.  

It is relevant to note that while the algorithm is simple in design, its analysis entails novel insights. In particular, the domain of solutions, $\mathcal{I}_1 \times \ldots \times \mathcal{I}_T$, is combinatorial and, hence, it is not obvious if a local maximum solution of $\varphi$ upholds any global approximation guarantees for $\varphi$, let alone for NSW. Furthermore, a multiplicative approximation bound for $\varphi$ does not translate into a multiplicative guarantee for NSW: for any solution $\mathcal{F}$, we have $\frac{1}{n} \varphi \left(\mathcal{F} \right) = \log \left( \NSW (\mathcal{F}) \right)$. Therefore, even though a solution that (globally) maximizes $\varphi$ also maximizes NSW, multiplicative approximation guarantees get exponentially worse when one moves from $\varphi$ to NSW.\footnote{This observation also implies that one cannot directly utilize the approximation guarantee known for the so-called concave coverage problem \cite{barman2021tight} to obtain a commensurate approximation ratio for NSW maximization.}  

Interestingly, in lieu of developing local-to-global approximation guarantees, we rely on counting arguments to establish the approximation ratio. We prove that, at a local maximum solution $\F$ (of the function $\varphi$) and for any integer $\alpha \geq 4$, the number of $\alpha$-suboptimal agents is at most $n/\alpha$; here, an agent $i$ is said to be $\alpha$-suboptimal iff $i$'s current valuation $v_i(\F)$ is (about) ${1}/{\alpha}$ times less than her optimal valuation. We complete the analysis by proving that these Markov-like bounds ensure that the computed solution $\F$ achieves an $\left(18 + o(1)\right)$-approximation guarantee for NSW maximization. 

In addition, we complement the algorithmic result by proving that, in fair coverage instances, NSW maximization is {\rm APX}-hard (Theorem \ref{theorem:apx-hard}). This inapproximability result rules out a polynomial-time approximation scheme (PTAS) for NSW maximization in fair coverage instances. 

\subsection{Additional Related Work and Applications} 
The coverage framework generalizes the well-motivated setup of public decision making \cite{conitzer2017fair}, albeit for agents that have binary additive valuations. The public decision making setup captures settings wherein decisions have to be made on $T$ social issues, that can impact many of the $n$ agents simultaneously. Specifically, each issue $t \in [T]$ is associated with a set of alternatives $A_t = \{a^1_t, a^2_t,\ldots, a_t^{\ell_t} \}$ and every agent $i\in [n]$ has an additive valuation over the issues. That is, for any outcome $\mathcal{A}=(a_1,a_2,\ldots,a_T) \in A_1 \times A_2 \times \ldots \times A_T$, agent $i$'s utility is $u_i(\mathcal{A}) = \sum_{t=1}^T u_i^t(a_t)$; here $u^t_i(a_t) \in \mathbb{R}_+$ is the utility that $i$ gains from the alternative $a_t \in A_t$. 

Indeed, for agents $i \in [n]$ with binary additive valuations (i.e., $u^t_i(a) \in \{0,1\}$ for all $t$ and $a \in A_t$) the coverage framework generalizes public decision making: for every $t \in [T]$, define the set family $\mathcal{I}_t$ by including in it the set $F_a  \coloneqq \{ i \in [n] : u_i^t(a)  = 1 \}$ for each $a \in A_t$. In particular, $\mathcal{I}_t$ contains a set $F_a$, for each alternative $a \in A_t$, where $F_a$ is the set of agents that value alternative $a$. This reduction gives us set families of polynomial size ($|\mathcal{I}_t| = |A_t|$) and, hence, our results specialize  to this case. 

In the public decision making context, Conitzer et al.~\cite{conitzer2017fair} obtain fairness guarantees in terms of relaxations of proportionality. They also show that Nash optimal solutions bear particular fairness properties. Complementing these results and for agents with (smoothed) binary additive valuations, the current work obtains approximation guarantees for NSW in public decision making. 

The coverage framework also encompasses the standard fair division setting that entails allocation of $m$ indivisible goods among $n$ agents that have binary additive valuations. Multiple prior works have studied NSW in this discrete fair division setting; see, e.g., \cite{barman2018greedy} and \cite{halpern2020fair}. Here, each agent $i \in [n]$ prefers a subset of the goods $V_i \subseteq [m]$ and agent $i$'s valuation $u_i(S) = |S \cap V_i|$, for any $S \subseteq [m]$. One can express this setting as a coverage instance by considering $T = m$ set families each comprised of singleton subsets. Specifically, for each good $g \in [m]$, we have a set family $\mathcal{I}_g$ that includes all singletons $\{i\}$ with the property that $g \in V_i$, i.e., subset $\{i\}$ is included in $\mathcal{I}_g$ iff agent $i$ values good $g$. As in the public decision making setting, here we obtain a coverage instance with polynomially large $\mathcal{I}_t$-s.   

With Nash welfare as a notion of fairness, Fluschnik et al.~\cite{fluschnik2019fair} study fair selection of indivisible goods under a knapsack constraint.\footnote{Fluschnik et al.~\cite{fluschnik2019fair} also highlight connections between NSW and proportional approval voting.} By contrast, the current work addresses combinatorial constraints over subsets of agents.

\section{Notation and Preliminaries}
An instance of a fair coverage problem is specified as a tuple $\langle [n], T, \{\mathcal{I}_t\}_{t=1}^{T}\rangle$, where $[n] = \{1,2,\ldots,n\}$ denotes the set of agents and $T \in \mathbb{Z}_+$ denotes the number of subsets (of the agents) to be selected. Here, for each $t \in [T]$, the $t$th selected subset (say $F_t \subseteq [n]$) is constrained to be from the family $\mathcal{I}_t$, i.e., each $\mathcal{I}_t \subseteq 2^{[n]}$ specifies the possible choices for the $t$th selection. It is not necessary that the set families $\mathcal{I}_t$-s are given explicitly; our algorithmic result only requires a blackbox access to an FPTAS for weight maximization over $\I_t$-s. 

For a fair coverage instance $\langle [n], T, \{\mathcal{I}_t\}_{t=1}^{T}\rangle$, a solution $\mathcal{F} = (F_1,F_2,\ldots, F_T)$ is a tuple with the property that $F_t \in \mathcal{I}_t$ for all $t \in [T]$. We address settings wherein the valuation of each agent depends on the number of times it is covered among the selected subsets. Specifically, for a solution $\mathcal{F} = (F_1,F_2,\ldots, F_T)$, the coverage value $v_i(\mathcal{F})$, of agent $i \in [n]$, is defined as $v_i(\mathcal{F}) \coloneqq | \{t\in [T] : i \in F_t \} | + 1$. Note the coverage value of each agent is smoothed by adding $1$. This smoothing ensures that the Nash social welfare of any solution is nonzero. We, in fact, show that if each agent's value is equated to exactly the number of times it is covered among the subsets, then one cannot achieve \emph{any} multiplicative approximation guarantee for Nash social welfare maximization (Appendix \ref{appendix:non-smooth}).    

The Nash social welfare (NSW) of a solution $\mathcal{F}$ is defined as the geometric mean of the agents' coverage values, $\textsc{NSW}\left(\mathcal{F} \right) \coloneqq  \left( \prod\limits_{i=1}^n v_i(\mathcal{F}) \right)^\frac{1}{n}$. We will write $\mathcal{F}^* = (F_1^*, F_2^*, \ldots, F_T^*)$ to denote a solution that maximizes the Nash social welfare in a given fair coverage instance. Furthermore, a solution $\widehat{\mathcal{F}}$ is said to achieve a $\gamma$-approximation guarantee for the Nash social welfare maximization problem iff $\textsc{NSW}(\widehat{\mathcal{F}}) \geq \frac{1}{\gamma}\textsc{NSW}\left(\mathcal{F}^*\right)$. The current work develops a constant-factor approximation algorithm for NSW maximization in fair coverage instances. 


As mentioned previously, the algorithm works with a blackbox access to an FPTAS for weight maximization over $\I_t$-s. Specifically, with parameter $\beta \coloneqq \BetaVal$, we will write \textsc{ApxMaxWt} to denote a subroutine (blackbox) that takes as input weights $w_1, \ldots,w_n \in \mathbb{R}_+$, along with index an $t \in [T]$, and finds a $(1 - \beta)$-approximation to $\max\limits_{X\in \I_t} \ \sum\limits_{i\in X} w_i$. The assumption that weight maximization over $\I_t$-s admits an FPTAS implies that a $(1 - \beta)$-approximation (with $\beta = \BetaVal$) can be computed in polynomial time.     


For any solution $\mathcal{F} = (F_1,\ldots,F_T)$, index $t \in [T]$, and subset $X \in \mathcal{I}_t$, write $(X, \mathcal{F}_{-t})$ to denote the solution obtained by replacing $F_t$ with $X$, i.e., $(X, \mathcal{F}_{-t}) \coloneqq (F_1,\allowbreak,\allowbreak \ldots, \allowbreak F_{t-1},\allowbreak X, \allowbreak F_{t+1},\ldots,F_T)$. Finally, we will write $\varphi(\mathcal{F})$  to denote the log social welfare of the agents under solution $\mathcal{F}$, i.e., $\varphi(\mathcal{F})  \coloneqq \sum\limits_{i=1}^n \log \left( v_i(\mathcal{F}) \right)$.

\section{Approximation Algorithm for Nash Social Welfare}
This section develops an $(18 + o(1))$-approximation algorithm for maximizing Nash social welfare in fair coverage instances. Given any instance $\langle [n], T, \{\mathcal{I}_t\}_{t=1}^{T}\rangle$, our algorithm \textsc{Alg} (Algorithm \ref{Alg}) starts with an arbitrary solution $\mathcal{F} = (F_1,\ldots,F_T) \in \mathcal{I}_1 \times \ldots \times \mathcal{I}_T$ and iteratively performs local updates as long as it experiences a sufficient (additive) increase in the log social welfare $\varphi$. Here, for any solution $\mathcal{F} = (F_1,\ldots,F_T)$, a local update corresponds to replacing---for some $\tau\in [T]$---the subset $F_\tau$ with some other subset $A_\tau \in \mathcal{I}_\tau$. For updating a solution $\mathcal{F}$ and with $\varphi$ as a guiding objective, the algorithm addresses the problem of finding, for every $t \in [T]$, a subset $A_t \in \mathcal{I}_t$ that achieves $\max\limits_{X \in \mathcal{I}_t} \ \varphi(X, \mathcal{F}_{-t}) - \varphi(\mathcal{F})$. Notably, we reduce this problem to that of weight maximization over $\mathcal{I}_t$-s, by setting appropriate weights $w^t_i$, for each agent $i \in [n]$ and each index $t \in [T]$.  In particular, for a current solution $\mathcal{F}=(F_1, \ldots, F_T)$, the algorithm sets the weights as follows 
\begin{align*}
w^t_i = \begin{cases}
		\log \left(v_i (\mathcal{F}) \right) \ - \ \log \left(v_i (\mathcal{F}) -1 \right) \ & \text{  if } i \in F_t \\
		 \log\left(v_i (\mathcal{F}) + 1 \right)  \ - \ \log\left( v_i(\mathcal{F}) \right) & \text{  otherwise, if } i \in [n] \setminus F_t.
	 \end{cases}
\end{align*} 

We note that for each agent $i \in F_t$, the coverage value $v_i(\mathcal{F}) \geq 2$; this follows from the inclusion of `$+1$' in the definition of coverage value. Hence, the weights (specifically, the terms $\log \left(v_i (\mathcal{F}) -1 \right)$ for $i \in F_t$) are well defined. This is a relevant implication of smoothing the coverage values. 

Moreover, this weight assignment ensures that, for every subset $X \subseteq [n]$, its weight $\sum_{i \in X} w^t_i = \left( \varphi(X, \mathcal{F}_{-t}) - \varphi(\mathcal{F})  \right) + \sum_{j \in F_t} w^t_j$ (see Claim \ref{claim:WtPhi}). Since the weight of the current subset $F_t$ (i.e., $\sum_{j \in F_t} w^t_j$) is fixed, finding a subset $X \in \mathcal{I}_t$ with maximum possible weight is equivalent to finding a subset that maximizes $\varphi(X, \mathcal{F}_{-t}) - \varphi(\mathcal{F})$. In fact, we show that an FPTAS for this weight maximization suffices. As mentioned previously, we denote by $\textsc{ApxMaxWt}(t, w^t_1, \ldots, w^t_n)$ a subroutine (blackbox) that takes as input weights $w^t_1, \ldots, w^t_n \in \mathbb{R}_+$ and finds a $(1 - \beta)$-approximation to $\max\limits_{X\in \I_t} \ \sum\limits_{i\in X} w^t_i$; where the parameter $\beta = \BetaVal$. 

Hence, for updating solution $\mathcal{F}=(F_1, \ldots, F_T)$, the algorithm invokes $\textsc{ApxMaxWt}$ to obtain candidate subsets $A_1, A_2, \ldots, A_T$. If, for some index $\tau \in [T]$, replacing $F_\tau$ by $A_\tau$ leads to a sufficient additive increase $\varphi$, then \textsc{Alg} updates the solution to $(A_\tau, \mathcal{F}_{-\tau})$. Specifically, the algorithm sets parameter $\varepsilon \coloneqq \EpsVal$ and if $\varphi \left( A_\tau, \mathcal{F}_{- \tau} \right)  - \varphi(\mathcal{F}) \geq \frac{\varepsilon n}{8 T}$, then it updates the solution (see Lines \ref{line:phidiffcheck} and \ref{line:update} in Algorithm \ref{Alg}). Otherwise, if for all the candidate subsets the increase in $\varphi$ is less than $\frac{\varepsilon n}{8 T}$, the algorithm terminates. 

Note that, for any solution $\widehat{\mathcal{F}}$, the log social welfare $\varphi(\widehat{\mathcal{F}})$ is at most $n \log (T+1)$.\footnote{Indeed, for any solution $\widehat{\mathcal{F}}$, we have $v_i(\widehat{\mathcal{F}}) \leq T+1$, for all agents $i \in [n]$.} This observation, and the fact that in every iteration of $\textsc{Alg}$ the log social welfare of the maintained solution increases by at least $\frac{\varepsilon n}{8 T}$, imply that the algorithm terminates in polynomial time (Lemma~\ref{lemma:runtime}). Overall, the algorithm efficiently finds a local maximum of $\varphi$. 

\floatname{algorithm}{Algorithm}
\begin{algorithm}[ht]
	\caption{\textsc{Alg}} \label{Alg} 
	\textbf{Input:} Instance $\langle [n], T, \{\mathcal{I}_t\}_{t=1}^{T} \rangle$. \\ \textbf{Output:} A solution $\mathcal{F} = (F_1, \ldots, F_T)$.
	\begin{algorithmic}[1]
		\State Initialize $\mathcal{F} = (F_1, F_2, \ldots, F_T) \in \mathcal{I}_1 \times \mathcal{I}_2 \times \ldots \times \mathcal{I}_T$ to be an arbitrary solution and, for all agents $i \in [n]$, set coverage value $v_i = v_i(\mathcal{F})$. Set parameter $\varepsilon \coloneqq \EpsVal$. 
		\State For each $t\in [T]$ and all agents $i \in [n]$, set weight \label{line:weights} 
		\begin{align*}
		w^t_i = \begin{cases}
		 \log v_i - \log(v_i-1) & \text{if } i \in F_t \\
		 \log(v_i+1)-\log{v_i} & \text{if } i \in [n] \setminus F_t.
		\end{cases}
		\end{align*} 
		
		\State For each $t\in [T]$, set $A_t$ = \textsc{ApxMaxWt}$(t, w^t_1, w^t_2,\ldots, w^t_n)$.	
		\While{there exists $ \tau \in [T]$ such that $\varphi \left( A_\tau, \mathcal{F}_{- \tau} \right)-\varphi(\mathcal{F}) \geq \frac{\varepsilon n}{8 T}$} \label{line:phidiffcheck}
		\State Update $\mathcal{F} \leftarrow \left( A_\tau, \mathcal{F}_{- \tau} \right)$, i.e., update $F_\tau \leftarrow A_\tau$. \label{line:update}
		\State For all agents $i \in [n]$, update coverage value $v_i = v_i(\mathcal{F})$. 
		\State For each $t\in [T]$ and all agents $i \in [n]$, set weights $w^t_i$ as in Line \ref{line:weights}.  \label{line:weights1}
		\State Set $A_t$ = \textsc{ApxMaxWt}$(t, w^t_1, w^t_2,\ldots, w^t_n)$ for all $t \in [T]$.  \label{line:Atau}
		\EndWhile \label{SATloop}\\
		\Return solution $\mathcal{F}$
	\end{algorithmic}
\end{algorithm}

We establish the approximation ratio via counting arguments. In the analysis, for each maintained solution $\mathcal{F}$, we consider the agents $i$ whose current coverage value, $v_i(\mathcal{F})$, is sufficiently smaller than their optimal coverage value, $v_i(\mathcal{F}^*)$; recall that $\mathcal{F}^*$ denotes a Nash optimal solution.  In particular, for a solution $\mathcal{F}$ and any integer $\alpha \in \mathbb{Z}_+$, we will write $S_\alpha^\mathcal{F}$ to denote the subset of agents whose coverage value is $\alpha (2.25 + \varepsilon)$ times less than their optimal, where, $\varepsilon = \EpsVal$. Formally, for any $\alpha \in \mathbb{Z}_+$, the set of $\alpha$-suboptimal agents is defined as\footnote{Here, the constant $2.25$ is selected to achieve the desired approximation ratio.} 
\begin{align}
 S^\mathcal{F}_\alpha & \coloneqq \left\{i\in [n] : v_i (\mathcal{F}) < \frac{1}{\alpha (2.25 + \varepsilon)} v_i(\mathcal{F}^*) \right\} \label{eqn:sfalpha}
\end{align} 
First, we prove that, for any solution $\mathcal{F}$ and \emph{any} integer $\alpha \geq 4$, if the number of $\alpha$-suboptimal agents is more than $\frac{n}{\alpha}$, then there necessarily exists a local update that increases $\varphi$ by a sufficient amount (Lemma \ref{lemma:tauexists}). Contrapositively, we obtain that, for the solution finally obtained by \textsc{Alg} and for any $\alpha \geq 4$, the number of $\alpha$-suboptimal agents is at most $n/\alpha$. We complete the analysis by proving that this guarantee ensures that $\textsc{Alg}$ achieves a constant-factor approximation ratio for NSW maximization; more formally, we will establish the following theorem (in Section \ref{section:proof-of-main}).
\begin{theorem}[Main Result]
\label{theorem:main}
Given any fair coverage instance $\langle [n], T, \{\mathcal{I}_t\}_{t=1}^{T} \rangle$, with blackbox access to an FPTAS for weight maximization over $\I_t$-s, \textsc{Alg} (Algorithm~\ref{Alg}) computes---in polynomial time---an $\left(18+ \frac{1}{2nT} \right)$-approximate solution for the Nash social welfare maximization problem.
\end{theorem}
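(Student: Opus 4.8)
The plan is to combine the two ingredients already advertised: the polynomial running time (Lemma~\ref{lemma:runtime}) together with the structural guarantee that, at the terminal solution $\mathcal{F}$ of \textsc{Alg}, the number of $\alpha$-suboptimal agents is at most $n/\alpha$ for every integer $\alpha \geq 4$ (the contrapositive of Lemma~\ref{lemma:tauexists}). The running time claim will handle the ``polynomial time'' half of the theorem, so the real work is turning the Markov-like bounds $|S_\alpha^\mathcal{F}| \leq n/\alpha$ into a multiplicative comparison of $\NSW(\mathcal{F})$ with $\NSW(\mathcal{F}^*)$, i.e.\ into a bound on $\varphi(\mathcal{F}^*) - \varphi(\mathcal{F}) = \sum_{i=1}^n \log\!\big(v_i(\mathcal{F}^*)/v_i(\mathcal{F})\big)$, since $\log \NSW = \tfrac1n \varphi$.

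First I would stratify the agents by how far behind optimal they are. For each integer $\alpha \geq 1$ let $T_\alpha \coloneqq \{ i : v_i(\mathcal{F}^*)/v_i(\mathcal{F}) \geq \alpha(2.25+\varepsilon)\}$, so that $T_\alpha = S_\alpha^\mathcal{F}$ (up to the exact inequality convention) and the sets are nested, $T_1 \supseteq T_2 \supseteq \cdots$. An agent $i$ with ratio $\rho_i \coloneqq v_i(\mathcal{F}^*)/v_i(\mathcal{F})$ lies in exactly the sets $T_1,\dots,T_{k_i}$ where $k_i = \lfloor \rho_i/(2.25+\varepsilon)\rfloor$. Writing $\log \rho_i \leq \log(2.25+\varepsilon) + \log(k_i + 1)$ and telescoping, one gets
\[
\sum_{i=1}^n \log \rho_i \ \leq \ n\log(2.25+\varepsilon) \ + \ \sum_{i : k_i \geq 1} \log(k_i+1).
\]
Now $\sum_{i : k_i\geq 1}\log(k_i+1)$ can be rewritten, using $\log(k+1) = \sum_{\alpha=1}^{k}\big(\log(\alpha+1)-\log\alpha\big)$, as $\sum_{\alpha \geq 1}\big(\log(\alpha+1)-\log\alpha\big)\,|T_\alpha|$. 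Plugging in $|T_\alpha| \leq n/\alpha$ for $\alpha \geq 4$, and bounding $|T_\alpha| \leq n$ crudely for $\alpha \in \{1,2,3\}$, the tail $\sum_{\alpha \geq 4}\big(\log\tfrac{\alpha+1}{\alpha}\big)\tfrac{n}{\alpha}$ is a convergent series (comparable to $\sum 1/\alpha^2$) whose numerical value I would evaluate; call its value plus the first three terms $c\cdot n$ for an explicit constant $c$. This yields $\varphi(\mathcal{F}^*)-\varphi(\mathcal{F}) \leq n\big(\log(2.25+\varepsilon) + c\big)$, hence $\NSW(\mathcal{F}^*)/\NSW(\mathcal{F}) \leq (2.25+\varepsilon)\,e^{c} = 2.25\,e^{c} + o(1)$, and the task reduces to checking that $2.25\cdot e^{c} \leq 18$ — equivalently $e^c \leq 8$ — with the $\varepsilon = \tfrac{1}{16nT}$ slack accounting for the claimed $\tfrac{1}{2nT}$ additive term in the theorem.

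The main obstacle, and the step I expect to require the most care, is getting the constants to land exactly at $18 + \tfrac{1}{2nT}$: this means being precise about (i) the exact inequality/rounding conventions in the definition of $S_\alpha^\mathcal{F}$ versus the $k_i$-stratification, (ii) the contribution of the small indices $\alpha = 1,2,3$ where the clean $n/\alpha$ bound is unavailable — one may need a sharper bound than $|T_\alpha|\leq n$ there, perhaps $|T_3| \leq n/3$ still holds since $3 \geq \ldots$ wait, Lemma~\ref{lemma:tauexists} only gives it for $\alpha \geq 4$, so indices $1,2,3$ genuinely must be handled by the trivial bound and absorbed into the constant $2.25 e^c$ — and (iii) converting the $\varepsilon$-perturbations ($2.25+\varepsilon$ inside, and the approximate-local-optimum slack $\varepsilon n/(8T)$ and $\beta = \tfrac{1}{64nT^2}$ from \textsc{ApxMaxWt}) into exactly a $\tfrac{1}{2nT}$ additive loss in the final ratio, which is a bookkeeping exercise in $\log(1+x)\le x$ style estimates. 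I would organize the write-up as: (1) invoke Lemma~\ref{lemma:runtime} for polynomial time; (2) invoke the contrapositive of Lemma~\ref{lemma:tauexists} for the terminal solution; (3) carry out the stratification/telescoping above; (4) evaluate the constant and verify $2.25 e^{c} \le 18$; (5) fold in the $\varepsilon$ and $\beta$ slack to reach $18 + \tfrac{1}{2nT}$.
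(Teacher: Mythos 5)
Your proposal is correct in spirit and would in fact prove a \emph{stronger} bound than the stated one, but it takes a genuinely different route from the paper. The paper performs a \emph{dyadic} (powers-of-two) stratification: it partitions the agents into $X_{2^d}$ for $d\ge 2$ according to which dyadic band the ratio $v_i(\F^*)/v_i(\F)$ falls into, bounds every agent in a band by that band's worst endpoint, invokes $|X_{2^d}|\le |S^\F_{2^d}|\le n/2^d$ (via the contrapositive of Corollary~\ref{lem:algexecution}, which is the loop-termination version of Lemma~\ref{lemma:tauexists}), and reduces the remaining product to a single numeric constant, $\prod_{d\ge 2}(2^{1-d})^{2^{-d}}\ge 1/2$, giving $\frac{1}{2}\cdot\frac{1}{9+4\varepsilon}=\frac{1}{18+8\varepsilon}$. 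You instead stratify at \emph{every} integer level $\alpha$ and reorganize the sum $\sum_i\log\rho_i$ by an Abel-summation/level-set identity, $\sum_i\log(k_i+1)=\sum_{\alpha\ge1}\log\!\big(\tfrac{\alpha+1}{\alpha}\big)\,|T_\alpha|$, then feed in $|T_\alpha|\le n/\alpha$ only for $\alpha\ge 4$ and the trivial bound for $\alpha\le 3$. This finer stratification avoids the factor-$2$ slack inherent in dyadic binning: your constant works out to $c=\ln 4 + \sum_{\alpha\ge4}\tfrac1\alpha\ln\tfrac{\alpha+1}{\alpha}\approx 1.66$, so $(2.25+\varepsilon)e^c\approx (9+4\varepsilon)e^{0.27}\approx 11.8+O(\varepsilon)$, comfortably below $18$, and the $\varepsilon e^c\le \tfrac{1}{2nT}$ slack check is trivial since $e^c<8$. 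Both proofs hang on the same two pillars you identified (Lemma~\ref{lemma:runtime} for runtime, the Markov-type bound $|S^\F_\alpha|\le n/\alpha$ at termination); the only substantive difference is dyadic binning versus integer-level summation-by-parts. Two cleanup items for a full write-up: (i) cite Corollary~\ref{lem:algexecution} rather than Lemma~\ref{lemma:tauexists} for the termination guarantee, since it is the corollary that accounts for the FPTAS slack in the while-loop; and (ii) the strict-versus-nonstrict mismatch between your $T_\alpha$ and the paper's $S^\F_\alpha$ is real but easily repaired by defining $k_i \coloneqq \max\{\alpha\in\mathbb{Z}_{\ge0} : \rho_i > \alpha(2.25+\varepsilon)\}$, which makes $\{i:k_i\ge\alpha\}=S^\F_\alpha$ exactly while still giving $\rho_i\le(k_i+1)(2.25+\varepsilon)$, so the rest of your telescoping goes through unchanged.
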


\subsection{Algorithm's Analysis}
\label{section:analysis}

The following claim bounds the change in log social welfare $\varphi$ when a solution is updated. 

\begin{restatable}{claim}{Boundphidiff}
\label{lem:boundphidiff}
For a solution $\mathcal{F} = (F_1,\ldots,F_T)$, let value $v_i \coloneqq v_i(\mathcal{F})$ for all agents $i\in [n]$. Then, for any subset $X \subseteq [n]$ and any index $t \in [T]$, we have 
\begin{align*}
\varphi(X,\mathcal{F}_{-t}) - \varphi(\mathcal{F})\geq \sum_{i\in X} \frac{1}{v_i+1} - \sum_{j\in F_t} \frac{1}{v_j-1}.
\end{align*}
\end{restatable}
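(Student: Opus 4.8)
The claim is purely a statement about the scalar function
$\varphi$ and how it changes when one coordinate $F_t$ is swapped for an arbitrary
$X$, so the plan is to expand $\varphi(X,\mathcal F_{-t})-\varphi(\mathcal F)$
agent by agent and bound each term. Writing $v_i=v_i(\mathcal F)$, note that
replacing $F_t$ by $X$ changes agent $i$'s coverage value only according to the
two membership indicators $[i\in F_t]$ and $[i\in X]$: if $i\in X\setminus F_t$
the value goes up by one (from $v_i$ to $v_i+1$), if $i\in F_t\setminus X$ it goes
down by one (from $v_i$ to $v_i-1$), and otherwise it is unchanged. Hence
\[
\varphi(X,\mathcal F_{-t}) - \varphi(\mathcal F)
= \sum_{i\in X\setminus F_t}\bigl(\log(v_i+1)-\log v_i\bigr)
- \sum_{j\in F_t\setminus X}\bigl(\log v_j - \log(v_j-1)\bigr).
\]

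Next I would lower-bound the right-hand side using elementary inequalities for
$\log$. For the ``gain'' terms, concavity of $\log$ gives
$\log(v_i+1)-\log v_i \ge \tfrac{1}{v_i+1}$ (the derivative of $\log$ at the
right endpoint). For the ``loss'' terms I need an \emph{upper} bound: again by
concavity, $\log v_j - \log(v_j-1) \le \tfrac{1}{v_j-1}$ (the derivative at the
left endpoint); here $v_j\ge 2$ for $j\in F_t$, exactly as the paper notes after
introducing the weights, so $v_j-1\ge 1$ and the bound is well defined.
Substituting,
\[
\varphi(X,\mathcal F_{-t}) - \varphi(\mathcal F)
\;\ge\; \sum_{i\in X\setminus F_t}\frac{1}{v_i+1}
\;-\; \sum_{j\in F_t\setminus X}\frac{1}{v_j-1}.
\]

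Finally I would reconcile the index sets $X\setminus F_t$ and $F_t\setminus X$
with the sets $X$ and $F_t$ appearing in the claim. Enlarging the negative sum
from $F_t\setminus X$ to all of $F_t$ only decreases the bound (every added term
$\tfrac{1}{v_j-1}$ is nonnegative), which is in the right direction. For the
positive sum, I want to enlarge $X\setminus F_t$ to all of $X$; the extra terms
are those $i\in X\cap F_t$, contributing $\tfrac{1}{v_i+1}$ each, and these are
already cancelled by the corresponding terms in the (now full) sum over $F_t$
since for $i\in X\cap F_t$ we have $\tfrac{1}{v_i+1}\le \tfrac{1}{v_i-1}$. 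Thus
adding $\sum_{i\in X\cap F_t}\tfrac{1}{v_i+1}$ to the positive side and the
larger quantity $\sum_{i\in X\cap F_t}\tfrac{1}{v_i-1}$ (already present) to the
negative side preserves the inequality, yielding
$\varphi(X,\mathcal F_{-t})-\varphi(\mathcal F)\ge \sum_{i\in X}\tfrac{1}{v_i+1}-\sum_{j\in F_t}\tfrac{1}{v_j-1}$,
as desired.

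There is no real obstacle here; the only points requiring a little care are
(i) choosing the \emph{correct} direction of the two logarithmic mean-value
inequalities (lower bound $\tfrac{1}{v_i+1}$ on the increments, upper bound
$\tfrac{1}{v_j-1}$ on the decrements), and (ii) the bookkeeping that lets one
replace the disjoint-difference index sets by the full sets $X$ and $F_t$
without breaking the inequality, which hinges on the elementary fact
$\tfrac{1}{v+1}\le\tfrac{1}{v-1}$ for $v\ge 2$.
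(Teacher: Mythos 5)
Your proposal is correct and uses essentially the same approach as the paper: both start from the decomposition of $\varphi(X,\mathcal F_{-t})-\varphi(\mathcal F)$ over the symmetric-difference sets $X\setminus F_t$ and $F_t\setminus X$, and both rely on the two elementary bounds $\log(v+1)-\log v\ge \tfrac{1}{v+1}$ and $\log v-\log(v-1)\le \tfrac{1}{v-1}$ together with the monotonicity that lets one extend the sums to all of $X$ and all of $F_t$. The only difference is the order of operations: the paper first adds the nonpositive quantity $\sum_{k\in X\cap F_t}\bigl[(\log(v_k+1)-\log v_k)-(\log v_k-\log(v_k-1))\bigr]$ to extend the index sets at the level of log-differences and then passes to reciprocals, whereas you pass to reciprocals first and then extend the index sets using $\tfrac{1}{v+1}\le\tfrac{1}{v-1}$; these two steps commute and the resulting argument is the same.
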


The proof of Claim \ref{lem:boundphidiff} is deferred to Appendix \ref{appendix:prop-proofs}. Note that here, for each agent $j \in F_t$, the coverage value $v_j(\mathcal{F}) \geq 2$ and, hence, the subtracted terms, $\frac{1}{v_j-1}$, in the claim are well defined. 

Next, we bound the expected change in $\varphi$ when---for any solution $\mathcal{F}=(F_1, \ldots, F_T)$---we replace $F_t$ by $F^*_t$, for a $t \in [T]$ chosen uniformly at random.    

\begin{restatable}{lemma}{Expphidiff}
\label{lem:expphidiff}
For any solution $\mathcal{F} = (F_1,\ldots,F_T)$ and a Nash optimal solution $\mathcal{F^*} = (F^*_1, \ldots,F^*_T)$,  let values $v_i\coloneqq v_i(\mathcal{F})$ and $ v^*_i\coloneqq v_i(\mathcal{F^*})$, for all agents $i\in [n]$. Then, uniformly sampling index $t$ from the set $[T]$, we obtain 
\begin{align*}
\mathbb{E}_{t \in_R [T]} \ \Big[ \varphi(F^*_t,\mathcal{F}_{-t})-\varphi(\mathcal{F}) \Big] \geq \frac{1}{T}\sum\limits_{i=1}^n \left( \frac{v^*_i-1}{v_i+1} \right) \ - \ \frac{n}{T}.
\end{align*} 
\end{restatable}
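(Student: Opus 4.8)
The plan is to sum the bound from Claim~\ref{lem:boundphidiff} over the choice of index and then identify the resulting double sum with the claimed expression, term by term over the agents. First I would apply Claim~\ref{lem:boundphidiff} with $X = F^*_t$ to get, for each fixed $t \in [T]$,
\begin{align*}
\varphi(F^*_t,\mathcal{F}_{-t}) - \varphi(\mathcal{F}) \ \geq \ \sum_{i\in F^*_t} \frac{1}{v_i+1} \ - \ \sum_{j\in F_t} \frac{1}{v_j-1}.
\end{align*}
Averaging over $t$ chosen uniformly from $[T]$, the left-hand side becomes exactly $\mathbb{E}_{t \in_R [T]}[\varphi(F^*_t,\mathcal{F}_{-t})-\varphi(\mathcal{F})]$, and the right-hand side becomes $\frac{1}{T}\sum_{t=1}^T \sum_{i \in F^*_t} \frac{1}{v_i+1} - \frac{1}{T}\sum_{t=1}^T \sum_{j \in F_t}\frac{1}{v_j-1}$.

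The next step is to swap the order of summation in each double sum and count multiplicities. For the positive term, $\sum_{t=1}^T \sum_{i\in F^*_t}\frac{1}{v_i+1} = \sum_{i=1}^n \frac{1}{v_i+1}\,\bigl|\{t \in [T] : i \in F^*_t\}\bigr| = \sum_{i=1}^n \frac{v^*_i - 1}{v_i+1}$, using the definition $v^*_i = |\{t : i \in F^*_t\}| + 1$. For the negative term, the same reindexing gives $\sum_{t=1}^T \sum_{j\in F_t}\frac{1}{v_j-1} = \sum_{j=1}^n \frac{1}{v_j-1}\,\bigl|\{t \in [T] : j \in F_t\}\bigr| = \sum_{j=1}^n \frac{v_j - 1}{v_j-1} = \sum_{j=1}^n 1 = n$, since $|\{t : j \in F_t\}| = v_j - 1$. (Agents with $v_j = 1$ contribute nothing to this sum and cause no division issue, since they lie in no $F_t$.) Combining the two computed expressions yields $\mathbb{E}_{t \in_R [T]}[\varphi(F^*_t,\mathcal{F}_{-t})-\varphi(\mathcal{F})] \geq \frac{1}{T}\sum_{i=1}^n \frac{v^*_i-1}{v_i+1} - \frac{n}{T}$, which is exactly the claimed inequality.

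I do not expect a genuine obstacle here; the lemma is essentially a bookkeeping consequence of Claim~\ref{lem:boundphidiff} combined with linearity of expectation and a change in the order of summation. The only point requiring a little care is the negative term: one must make sure the identity $|\{t : j \in F_t\}| = v_j - 1$ is used correctly and that agents never covered by any $F_t$ (so $v_j = 1$) are handled without dividing by zero — which is automatic, since such agents appear in none of the inner sums. A secondary point worth stating explicitly is that $F^*_t \in \mathcal{I}_t$ for every $t$, so Claim~\ref{lem:boundphidiff} is legitimately applicable with $X = F^*_t$.
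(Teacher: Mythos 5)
Your proof is correct in substance and follows exactly the same route as the paper: apply Claim~\ref{lem:boundphidiff} with $X = F^*_t$, average over $t$, and exchange the order of summation (the paper phrases the exchange via indicator functions and probabilities $\mathbb{P}\{i \in F^*_t\} = (v^*_i-1)/T$, but this is the same bookkeeping). One small slip worth flagging: in the negative term you write $\sum_{t=1}^T \sum_{j\in F_t}\frac{1}{v_j-1} = \sum_{j=1}^n \frac{v_j-1}{v_j-1} = n$, but as you yourself observe in the parenthetical, agents with $v_j = 1$ appear in no $F_t$ and contribute $0$, so the reindexed sum is actually $\sum_{j\,:\,v_j\geq 2} 1 = |\{j : v_j \geq 2\}| \leq n$, not $= n$. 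Since this quantity is being subtracted, replacing it by the upper bound $n$ only weakens the inequality, so the conclusion is still correct — this is precisely the step where the paper passes from an equality to a $\geq$ — but your chain of equalities should end in a $\leq n$ rather than $= n$ to be consistent with your own observation about the $v_j = 1$ agents.
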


\begin{proof}
Invoking Claim~\ref{lem:boundphidiff}, with $X = F^*_t$ for each $t \in [T]$, we obtain 
\begin{align}
\mathbb{E}_{t\in_R [T]} \Big[\varphi(F^*_t,\mathcal{F}_{-t})-\varphi(\mathcal{F}) \Big] 
&\geq \mathbb{E}_{t\in_R [T]}\left[\sum_{i\in F^*_t} \frac{1}{v_i+1} - \sum_{j\in F_t} \frac{1}{v_j-1}\right] \nonumber \\
&=\mathbb{E}_{t\in_R [T]}\left[\sum_{i\in [n]} \mathbbm{1}\{i\in F^*_t\}\frac{1}{v_i+1} - \sum_{j\in [n]:v_j\geq 2} \mathbbm{1}\{j\in F_t\} \frac{1}{v_j-1}\right]\tag{since $v_j\geq 2$, for all $j\in F_t$} \nonumber\\
&=\sum_{i\in [n]} \mathbb{P}\{i\in F^*_t\}\frac{1}{v_i+1} - \sum_{j\in [n]:v_j\geq 2}\mathbb{P}\{j\in F_t\} \frac{1}{v_j-1} \label{ineq:indicator} 
\end{align}
Index $t$ is selected uniformly at random from the set $[T]$. Also, by definition, $v_i^*$ is equal to $1$ plus the number of subsets that contain $i$ in the Nash optimal solution $\mathcal{F}^*=(F^*_1, \ldots, F^*_T)$. Hence, the probability $\mathbb{P}\{i\in F^*_t\} = \frac{v^*_i - 1}{T}$, for all agents $i \in [n]$. Similarly, for the solution $\mathcal{F}=(F_1, \ldots, F_T)$, we have $\mathbb{P}\{j\in F_t\}  = \frac{v_j -1}{T}$, for all $j \in [n]$.  These equations and inequality (\ref{ineq:indicator}) give us 
\begin{align*}
\mathbb{E}_{t\in_R [T]} \Big[\varphi(F^*_t,\mathcal{F}_{-t})-\varphi(\mathcal{F}) \Big] & \geq \sum_{i\in [n]} \frac{v^*_i-1}{T} \cdot \frac{1}{v_i+1} - \sum_{j\in [n]:v_j\geq 2} \frac{v_j-1}{T} \cdot \frac{1}{v_j-1} \\
&\geq \frac{1}{T}\sum_{i\in [n]} \left( \frac{v^*_i-1}{v_i+1}\right) \ - \  \frac{n}{T}.
\end{align*}
The lemma stands proved. 
\end{proof}

Next, we show that if, under a solution $\mathcal{F}$, the number of $\alpha$-suboptimal agents is large, then the log social welfare can be sufficiently increased by replacing $F_\tau$ with $F^*_\tau$, for some $\tau \in [T]$.  Recall that $\mathcal{F}^* = (F^*_1, \ldots, F^*_T)$ denotes a Nash optimal allocation and $S^{\mathcal{F}}_\alpha $ denotes the set of $\alpha$-suboptimal agents under solution $\mathcal{F}$; see equation (\ref{eqn:sfalpha}).  

\begin{restatable}{lemma}{Tauexists}
\label{lemma:tauexists}
For any solution $\mathcal{F}=(F_1,\ldots,F_T)$  and any $\alpha \geq 4$, if the number of $\alpha$-suboptimal agents is at least $\frac{n}{\alpha}$ (i.e., $|S^\mathcal{F}_\alpha|>\frac{n}{\alpha}$), then there exists an index $\tau \in [T]$ such that 
\begin{align*}
\varphi(F^*_\tau, \mathcal{F}_{-\tau})-\varphi(\mathcal{F})\geq \dfrac{\varepsilon n}{2T}. \label{eq:Sexists}
\end{align*}
\end{restatable}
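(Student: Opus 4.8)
The plan is to combine the expected-increase bound of Lemma~\ref{lem:expphidiff} with the defining inequality for $\alpha$-suboptimal agents, and then argue that a large $S^{\mathcal F}_\alpha$ forces the average over $t\in[T]$ of $\varphi(F^*_t,\mathcal F_{-t})-\varphi(\mathcal F)$ to be at least $\varepsilon n/(2T)$; since some index must meet the average, the desired $\tau$ exists. Concretely, I would start from Lemma~\ref{lem:expphidiff}, which gives
\[
\mathbb{E}_{t\in_R[T]}\big[\varphi(F^*_t,\mathcal F_{-t})-\varphi(\mathcal F)\big]\ \geq\ \frac{1}{T}\sum_{i=1}^n\frac{v^*_i-1}{v_i+1}\ -\ \frac{n}{T},
\]
so it suffices to lower-bound $\sum_i \frac{v^*_i-1}{v_i+1}$ by roughly $n+\tfrac{\varepsilon n}{2}$, i.e.\ to show the sum exceeds $n$ by a definite amount once $|S^{\mathcal F}_\alpha|>n/\alpha$.

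The main step is a careful accounting of the contributions $\frac{v^*_i-1}{v_i+1}$. For an $\alpha$-suboptimal agent $i\in S^{\mathcal F}_\alpha$ we have $v_i<\frac{1}{\alpha(2.25+\varepsilon)}v^*_i$, so $v^*_i> \alpha(2.25+\varepsilon) v_i \ge \alpha(2.25+\varepsilon)$ (as $v_i\ge 1$); this makes $v^*_i$ large, and in particular $v^*_i-1$ is a constant fraction of $v^*_i$, while $v_i+1 \le 2v_i$ since $v_i\ge1$. Hence for such $i$,
\[
\frac{v^*_i-1}{v_i+1}\ \ge\ \frac{v^*_i-1}{2v_i}\ \ge\ \frac{\tfrac{1}{2}v^*_i}{2v_i}\ =\ \frac{v^*_i}{4 v_i}\ \ge\ \frac{\alpha(2.25+\varepsilon)}{4},
\]
where the middle inequality uses $v^*_i\ge 2$ (valid since $v^*_i>\alpha(2.25+\varepsilon)\ge 4\cdot2.25>2$ for $\alpha\ge4$). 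For the remaining agents $i\notin S^{\mathcal F}_\alpha$ I would only use the trivial bound $\frac{v^*_i-1}{v_i+1}\ge 0$ (equivalently $v^*_i\ge1$). Summing,
\[
\sum_{i=1}^n\frac{v^*_i-1}{v_i+1}\ \ge\ |S^{\mathcal F}_\alpha|\cdot\frac{\alpha(2.25+\varepsilon)}{4}\ >\ \frac{n}{\alpha}\cdot\frac{\alpha(2.25+\varepsilon)}{4}\ =\ \frac{(2.25+\varepsilon)n}{4}.
\]
This is where I need to be careful: $(2.25+\varepsilon)/4 \approx 0.56 < 1$, so this crude bound is \emph{not} enough to exceed $n$. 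The fix is to not throw away the non-suboptimal agents entirely but to keep a $+n$-type term, or — more likely matching the paper's intent — to sharpen the estimate for suboptimal agents using that $v_i\ge 1$ gives $v_i+1\le 2v_i$ only loosely; in fact for $i\in S^{\mathcal F}_\alpha$ with $v_i=1$ we get $\frac{v^*_i-1}{v_i+1}=\frac{v^*_i-1}{2}\ge \frac{\alpha(2.25+\varepsilon)-1}{2}$, which is already $\ge \alpha$ for $\alpha\ge4$. So the right approach is to handle small $v_i$ and large $v_i$ separately, or simply observe $\frac{v^*_i-1}{v_i+1}\ge \frac{v^*_i-1}{2v_i} > \frac{\alpha(2.25+\varepsilon)v_i - 1}{2v_i} = \frac{\alpha(2.25+\varepsilon)}{2} - \frac{1}{2v_i}\ge \frac{\alpha(2.25+\varepsilon)}{2}-\frac12$, giving $\sum_i \frac{v^*_i-1}{v_i+1} > \frac{n}{\alpha}\left(\frac{\alpha(2.25+\varepsilon)}{2}-\frac12\right) = \frac{(2.25+\varepsilon)n}{2}-\frac{n}{2\alpha}\ge \frac{(2.25+\varepsilon)n}{2}-\frac{n}{8} \ge (1+\tfrac{\varepsilon}{2})\,n + \text{(slack)}$, using $\alpha\ge4$ and that $2.25/2 - 1/8 = 1$. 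That clean identity $\tfrac{2.25}{2}-\tfrac18=1$ is presumably why the constant $2.25$ was chosen.

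Putting it together: $\frac1T\sum_i\frac{v^*_i-1}{v_i+1} - \frac nT \ge \frac1T\big((1+\tfrac{\varepsilon}{2})n - n\big) = \frac{\varepsilon n}{2T}$. Therefore $\mathbb{E}_{t\in_R[T]}[\varphi(F^*_t,\mathcal F_{-t})-\varphi(\mathcal F)]\ge \frac{\varepsilon n}{2T}$, and since the maximum over $t\in[T]$ is at least the average, there exists $\tau\in[T]$ with $\varphi(F^*_\tau,\mathcal F_{-\tau})-\varphi(\mathcal F)\ge \frac{\varepsilon n}{2T}$, as claimed. The main obstacle is getting the constants to line up — in particular keeping track of the $-\frac{1}{2v_i}$ error terms and the $-\frac{n}{2\alpha}$ loss, and verifying that with $\alpha\ge4$ and the choice $2.25$ the bound clears $n$ with exactly an $\varepsilon n/2$ surplus; I expect a small amount of care is also needed at the boundary cases $v_i=1$ and to confirm $v^*_i\ge2$ for suboptimal agents so that $v^*_i-1\ge \tfrac12 v^*_i$ is not even needed, the linear-in-$v_i$ bound above being cleaner.
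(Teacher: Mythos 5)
Your proposal is correct and follows essentially the same route as the paper: apply Lemma~\ref{lem:expphidiff}, drop the non-suboptimal agents (their terms are nonnegative), lower-bound each $\alpha$-suboptimal agent's term via $v^*_i > \alpha(2.25+\varepsilon)v_i$, and conclude by ``maximum is at least the average.'' Your corrected inline estimate $\frac{v^*_i-1}{v_i+1} \geq \frac{\alpha(2.25+\varepsilon)}{2}-\frac{1}{2}$, combined with $\alpha \geq 4$ after summing so that $\frac{2.25}{2}-\frac{1}{8}=1$, is exactly the content of the paper's Claim~\ref{prop:epsilonineq} (which applies $\alpha\geq 4$ per agent to get $(1+\frac{\varepsilon}{2})\alpha$), so the two arguments coincide up to where that arithmetic is performed.
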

\begin{proof}
Consider any solution $\mathcal{F}$ and integer $\alpha \geq 4$ such that $|S^\mathcal{F}_\alpha|>\frac{n}{\alpha}$. For each agent $i \in [n]$, write $v_i \coloneqq v_i(\mathcal{F})$ and $v^*_i = v_i(\mathcal{F}^*)$. Now, Lemma~\ref{lem:expphidiff} gives us  
\begin{align*}
\mathbb{E}_{t\in_R [T]} \Big[\varphi(F^*_t,\mathcal{F}_{-t})-\varphi(\mathcal{F})\Big] & \geq \frac{1}{T}\sum\limits_{i=1}^n \left( \frac{v^*_i-1}{v_i+1} \right) \ - \ \frac{n}{T}\\
& \geq \frac{1}{T}\sum\limits_{i \in S^\mathcal{F}_\alpha} \left( \frac{v^*_i-1}{v_i+1}\right) \ - \ \frac{n}{T}\\ 
& \geq \frac{1}{T}\sum\limits_{i \in S^\mathcal{F}_\alpha} \left( \frac{\alpha (2.25+\varepsilon) v_i-1}{v_i+1}\right) \ - \ \frac{n}{T} \tag{by definition of $S^\mathcal{F}_\alpha$} 
\end{align*}
Claim~\ref{prop:epsilonineq} (stated and proved in Appendix \ref{appendix:prop-proofs}) shows that $\frac{\alpha (2.25 + \varepsilon) v -1}{v+1} \geq \left(1+\frac{\varepsilon}{2}\right)\alpha$, for all integers $\alpha \geq 4$ and $v \geq 1$. Therefore, the above-mentioned inequality simplifies to 
\begin{align*}
\mathbb{E}_{t\in_R [T]} \Big[\varphi(F^*_t,\mathcal{F}_{-t})-\varphi(\mathcal{F})\Big] & \geq \frac{1}{T}\sum\limits_{i \in S^\mathcal{F}_\alpha} \left(1+\frac{\varepsilon}{2}\right)\alpha \ - \ \frac{n}{T}\\  
& > \frac{1}{T}\frac{n}{\alpha} \left(1+\frac{\varepsilon}{2}\right)\alpha \ - \ \frac{n}{T} \tag{since $|S^\mathcal{F}_\alpha|>\frac{n}{\alpha}$} \\
&  = \frac{\varepsilon n}{2T}.
\end{align*}
Therefore, there exists a $\tau \in [T]$ such that 
\begin{align*}
	\varphi(F^*_\tau,\mathcal{F}_{-\tau})-\varphi(\mathcal{F}) \geq \dfrac{\varepsilon n}{2T}.
\end{align*}
This completes the proof of the lemma. 
\end{proof}

Using Lemma \ref{lemma:tauexists}, we will establish in Corollary \ref{lem:algexecution}, below, that the algorithm continues to iterate as long as the number of $\alpha$-suboptimal agents is more than $n/\alpha$. The proof of the corollary also utilizes the following claim.

\begin{restatable}{claim}{ClaimWtPhi}
\label{claim:WtPhi}
Let $\mathcal{F}=(F_1, \ldots, F_T)$ be any solution considered in \textsc{Alg} (Algorithm~\ref{Alg}) and, for all indices $t \in [T]$ and agents $i \in [n]$, let $w^t_i$-s be the corresponding weights set in Lines  \ref{line:weights} or \ref{line:weights1}. Then, the weight of any subset $X \subseteq [n]$ satisfies 
\begin{align*}
\sum_{i \in X} w^t_i = \left( \varphi(X, \mathcal{F}_{-t}) - \varphi(\mathcal{F})  \right) + \sum_{j \in F_t} w^t_j. 
\end{align*}
\end{restatable}
The proof of this claim appears in Appendix \ref{appendix:prop-proofs}. \\

\begin{corollary}\label{lem:algexecution}
For any solution $\mathcal{F}=(F_1,\ldots,F_T)$ considered in \textsc{Alg} (Algorithm~\ref{Alg}) and any $\alpha \geq 4$, if the number of $\alpha$-suboptimal agents is at least $\frac{n}{\alpha}$ (i.e., $|S^\mathcal{F}_\alpha|>\frac{n}{\alpha}$), then the execution condition in the while-loop (Line \ref{line:phidiffcheck}) of \textsc{Alg} holds.  
\end{corollary}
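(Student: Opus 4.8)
The plan is to combine Lemma~\ref{lemma:tauexists} with Claim~\ref{claim:WtPhi} to show that the candidate subset $A_\tau$ produced by \textsc{ApxMaxWt} for the index $\tau$ guaranteed by Lemma~\ref{lemma:tauexists} also yields a sufficient increase in $\varphi$. First I would invoke Lemma~\ref{lemma:tauexists}: since $|S^\mathcal{F}_\alpha| > n/\alpha$, there is an index $\tau \in [T]$ with $\varphi(F^*_\tau, \mathcal{F}_{-\tau}) - \varphi(\mathcal{F}) \geq \frac{\varepsilon n}{2T}$. Next, by Claim~\ref{claim:WtPhi} applied with $X = F^*_\tau$, the weight $\sum_{i \in F^*_\tau} w^\tau_i = \big(\varphi(F^*_\tau, \mathcal{F}_{-\tau}) - \varphi(\mathcal{F})\big) + \sum_{j \in F_\tau} w^\tau_j \geq \frac{\varepsilon n}{2T} + \sum_{j \in F_\tau} w^\tau_j$. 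Hence the optimal weight $\max_{X \in \mathcal{I}_\tau} \sum_{i \in X} w^\tau_i$ is at least $\frac{\varepsilon n}{2T} + \sum_{j \in F_\tau} w^\tau_j$, since $F^*_\tau \in \mathcal{I}_\tau$.

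Then I would use the approximation guarantee of \textsc{ApxMaxWt}: the returned subset $A_\tau$ satisfies $\sum_{i \in A_\tau} w^\tau_i \geq (1-\beta) \max_{X \in \mathcal{I}_\tau} \sum_{i \in X} w^\tau_i$. Applying Claim~\ref{claim:WtPhi} again with $X = A_\tau$ gives $\varphi(A_\tau, \mathcal{F}_{-\tau}) - \varphi(\mathcal{F}) = \sum_{i \in A_\tau} w^\tau_i - \sum_{j \in F_\tau} w^\tau_j \geq (1-\beta)\big(\frac{\varepsilon n}{2T} + \sum_{j \in F_\tau} w^\tau_j\big) - \sum_{j \in F_\tau} w^\tau_j = (1-\beta)\frac{\varepsilon n}{2T} - \beta \sum_{j \in F_\tau} w^\tau_j$. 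It remains to show this quantity is at least $\frac{\varepsilon n}{8T}$, the threshold in Line~\ref{line:phidiffcheck}.

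The main obstacle is controlling the error term $\beta \sum_{j \in F_\tau} w^\tau_j$. For $j \in F_\tau$, the weight is $w^\tau_j = \log v_j - \log(v_j - 1) = \log\!\big(1 + \tfrac{1}{v_j-1}\big) \leq \tfrac{1}{v_j - 1} \leq 1$, since $v_j \geq 2$; so $\sum_{j \in F_\tau} w^\tau_j \leq n$. With $\beta = \BetaVal$ and $\varepsilon = \EpsVal$, one computes $(1-\beta)\frac{\varepsilon n}{2T} - \beta n \geq \frac{\varepsilon n}{2T} - \beta\frac{\varepsilon n}{2T} - \beta n \geq \frac{\varepsilon n}{2T} - 2\beta n$ (using $\frac{\varepsilon}{2T} \leq 1$), and since $\beta = \frac{\varepsilon}{4T}$ one gets $\frac{\varepsilon n}{2T} - \frac{\varepsilon n}{2T}\cdot\frac{1}{2} \cdot \frac{?}{}$ — more carefully, $2\beta n = \frac{\varepsilon n}{2nT^2}\cdot \frac{1}{?}$; the point is that $\beta$ is chosen small enough (scaling like $\frac{1}{nT^2}$ versus $\varepsilon$ scaling like $\frac{1}{nT}$) that $2\beta n \leq \frac{\varepsilon n}{4T}$, leaving $\varphi(A_\tau, \mathcal{F}_{-\tau}) - \varphi(\mathcal{F}) \geq \frac{\varepsilon n}{2T} - \frac{\varepsilon n}{4T} = \frac{\varepsilon n}{4T} \geq \frac{\varepsilon n}{8T}$. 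Thus the while-loop condition in Line~\ref{line:phidiffcheck} is satisfied, which is exactly the claim. I expect the only genuinely fiddly part to be plugging in the precise values of $\beta$ and $\varepsilon$ to verify the final numerical inequality, so I would double-check that arithmetic against the definitions $\beta = \frac{1}{64nT^2}$ and $\varepsilon = \frac{1}{16nT}$.
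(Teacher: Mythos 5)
Your overall plan matches the paper's proof step for step: invoke Lemma~\ref{lemma:tauexists} to get the index $\tau$, translate the $\varphi$-increase into a weight gap via Claim~\ref{claim:WtPhi}, apply the FPTAS guarantee, and translate back. However, the final arithmetic step as written is wrong, and in a way that actually breaks the bound.

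You loosen $\beta\frac{\varepsilon n}{2T}$ to $\beta n$, obtaining the lower bound $\frac{\varepsilon n}{2T} - 2\beta n$, and then assert $2\beta n \leq \frac{\varepsilon n}{4T}$. Plugging in $\beta = \frac{1}{64nT^2}$ and $\varepsilon = \frac{1}{16nT}$ gives $\beta n = \frac{1}{64T^2} = \frac{\varepsilon n}{4T}$, so $2\beta n = \frac{\varepsilon n}{2T}$, not $\leq \frac{\varepsilon n}{4T}$. Your lower bound therefore collapses to exactly $0$, which does not meet the threshold $\frac{\varepsilon n}{8T}$. The discarded factor $\frac{\varepsilon}{2T}$ on the second error term was load-bearing. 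The paper avoids this by not merging the two error terms: it uses $\beta\sum_{j\in F_\tau} w_j^\tau \leq \beta n = \frac{\varepsilon n}{4T}$ for the first term, but keeps the second term as $\frac{\beta\varepsilon n}{2T}$ and bounds it by $\frac{\varepsilon n}{8T}$ via the crude inequality $\beta \leq \frac{1}{4}$. That yields $\frac{\varepsilon n}{2T} - \frac{\varepsilon n}{4T} - \frac{\varepsilon n}{8T} = \frac{\varepsilon n}{8T}$, exactly the needed bound. So: same route, but your final numeric estimate needs to be redone with the error terms kept separate.
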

\begin{proof}
Consider any solution $\mathcal{F}$ in \textsc{Alg} and integer $\alpha \geq 4$ such that $|S^\mathcal{F}_\alpha|>\frac{n}{\alpha}$. In such a case, we will show that there exists an index $\tau \in [T]$ for which the subset $A_\tau$ returned by the subroutine $\textsc{ApxMaxWt}(\tau, w^\tau_1, \ldots, w^\tau_n)$ (in Line~\ref{line:Atau}) satisfies $\varphi(A_\tau,\mathcal{F}_{-\tau})-\varphi(\mathcal{F}) \geq \frac{\varepsilon n}{8T}$. Hence, the while-loop continues to iterate. 

The desired index is in fact the one identified in Lemma \ref{lemma:tauexists}. In particular, Lemma \ref{lemma:tauexists} ensures that for an index $\tau \in [T]$ we have 
\begin{align}
\varphi(F^*_\tau, \mathcal{F}_{-\tau})-\varphi(\mathcal{F}) & \geq \frac{\varepsilon n}{2T} \label{ineq:tauhigh}
\end{align}
Now, Claim \ref{claim:WtPhi} (with $X = F^*_\tau$) gives us 
\begin{align*}
\sum_{i \in F^*_\tau} w^\tau_i & = \left( \varphi(F^*_\tau, \mathcal{F}_{-\tau}) - \varphi(\mathcal{F})  \right) + \sum_{i\in F_\tau} w^\tau_i  \\
& \geq \frac{\varepsilon n}{2T} + \sum_{i\in F_\tau} w^\tau_i \tag{via inequality (\ref{ineq:tauhigh})}
\end{align*}
Therefore, 
\begin{align}
\max_{X \in \I_\tau} \left\{ \sum\limits_{i\in X}  w^\tau_i \right\} & \geq \sum_{i\in F_\tau} w^\tau_i + \frac{\varepsilon n}{2T}\label{eq:lbmaxweight}
\end{align}
Recall that $\textsc{ApxMaxWt}(\tau, w^\tau_1, \ldots, w^\tau_n)$ returns a set $A_\tau \in \mathcal{I}_\tau$ with the property that
\begin{align}
\sum\limits_{i\in A_\tau} w^\tau_i & \geq \left(1- \beta \right)\left(\max_{X\in \I_\tau} \ \sum\limits_{i\in X} w^\tau_i\right)\label{eq:apxmaxweight}
\end{align}
Here, parameter $\beta = \BetaVal$. Since $\varepsilon = \EpsVal$, we have $\beta = \frac{\varepsilon}{4T}$. Inequalities~\eqref{eq:lbmaxweight} and~\eqref{eq:apxmaxweight} give us
		\begin{align*}
			\sum\limits_{i\in A_\tau} w^\tau_i &\geq (1- \beta) \left( \sum_{i\in F_\tau} w^\tau_i + \frac{\varepsilon n}{2T}\right)\\ 
			& = \sum_{i\in F_\tau} w_i^\tau + \frac{\varepsilon n}{2T} - \beta \sum_{i\in F_\tau} w^\tau_i - \frac{\beta \varepsilon n}{2T} \\
			& \geq \sum_{i\in F_\tau} w_i^\tau + \frac{\varepsilon n}{2T}  - \beta n -  \frac{\beta \varepsilon n}{2T} \tag{since $\sum_{i\in F_\tau} w^\tau_i \leq n$} \\
			&  = \sum_{i\in F_\tau} w_i^\tau + \frac{\varepsilon n}{2T}  - \frac{\varepsilon n}{4T} -  \frac{\beta \varepsilon n}{2T} \tag{since $\beta = \frac{\varepsilon}{4T}$} \\
			& \geq \sum_{i\in F_\tau} w_i^\tau + \frac{\varepsilon n}{2T}  - \frac{\varepsilon n}{4T} -  \frac{\varepsilon n}{8T}  \tag{since $\beta \leq \frac{1}{4} $} \\
			& = \sum_{i\in F_\tau} w_i^\tau + \frac{\varepsilon n}{8T}.
\end{align*}
Applying Claim \ref{claim:WtPhi}, with $X = A_\tau$, we get $\varphi(A_\tau,\mathcal{F}_{-\tau})-\varphi(\mathcal{F})\geq\frac{\varepsilon n}{8T}$. Therefore, the execution condition in the while-loop of \textsc{Alg} holds. This establishes the corollary. 
\end{proof}

We conclude the section by showing that the algorithm runs in polynomial time. 

\begin{lemma}[Runtime Analysis]
\label{lemma:runtime}
Given any fair coverage instance $\langle [n], T, \{\mathcal{I}_t\}_{t=1}^{T} \rangle$ with blackbox access to an FPTAS for weight maximization over $\I_t$-s, \textsc{Alg} (Algorithm~\ref{Alg}) terminates in time that is polynomial in $n$ and $T$. 
\end{lemma}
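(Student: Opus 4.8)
The plan is to bound the number of while-loop iterations and then argue that each iteration is polynomial-time. For the iteration count, I would use the monotone-progress argument already sketched in the text: the potential $\varphi(\mathcal{F})$ is nonnegative (since each $v_i \geq 1$, so $\log v_i \geq 0$), and it is bounded above by $n\log(T+1)$ because $v_i(\mathcal{F}) \leq T+1$ for every agent. By the while-loop execution condition (Line~\ref{line:phidiffcheck}) together with Claim~\ref{claim:WtPhi}, every time the algorithm performs an update in Line~\ref{line:update}, the value of $\varphi$ strictly increases by at least $\frac{\varepsilon n}{8T}$. Indeed, the loop condition tests $\varphi(A_\tau,\mathcal{F}_{-\tau})-\varphi(\mathcal{F}) \geq \frac{\varepsilon n}{8T}$ directly, and the update sets $\mathcal{F} \leftarrow (A_\tau,\mathcal{F}_{-\tau})$, so the increase is exactly this quantity (or more). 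Hence the number of iterations is at most
\[
\frac{n\log(T+1) - \varphi(\mathcal{F}_{\mathrm{init}})}{\varepsilon n / (8T)} \ \leq \ \frac{n\log(T+1)}{\varepsilon n/(8T)} \ = \ \frac{8T\log(T+1)}{\varepsilon}.
\]
Since $\varepsilon = \EpsVal$, this is $128\, n\, T^2 \log(T+1)$, which is polynomial in $n$ and $T$.

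Next I would account for the per-iteration cost. Each iteration does the following: recompute all coverage values $v_i = v_i(\mathcal{F})$ (Line, requires scanning the $T$ subsets, but one only needs to update the single changed coordinate $F_\tau$, so $O(nT)$ or even $O(n)$ incremental work); recompute the weights $w^t_i$ for all $t \in [T]$ and $i \in [n]$ (Line~\ref{line:weights1}), which is $O(nT)$ arithmetic operations on logarithms; and call $\textsc{ApxMaxWt}(t,\cdot)$ once for each $t \in [T]$ (Line~\ref{line:Atau}), followed by evaluating $\varphi(A_\tau,\mathcal{F}_{-\tau}) - \varphi(\mathcal{F})$ for each $\tau$ to check the loop condition, again $O(nT)$ work. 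Each call to $\textsc{ApxMaxWt}$ runs in time polynomial in $n$ and $1/\beta$ by the assumed FPTAS, and $1/\beta = 64\,nT^2$ is polynomial in $n$ and $T$; there are $T$ such calls per iteration. Multiplying the polynomial per-iteration cost by the polynomial iteration bound gives an overall running time polynomial in $n$ and $T$.

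One subtlety worth addressing explicitly is that $\textsc{ApxMaxWt}$ is specified to take nonnegative weights $w^t_i \in \mathbb{R}_+$, so I should note that the weight formula indeed yields nonnegative values: for $i \in F_t$ we have $w^t_i = \log v_i - \log(v_i-1) > 0$ since $v_i \geq 2$, and for $i \notin F_t$ we have $w^t_i = \log(v_i+1) - \log v_i > 0$; moreover these logs are of rationals with numerator and denominator at most $T+1$, so they have polynomial bit-complexity (or, if one prefers to avoid exact arithmetic with logarithms, the FPTAS tolerance absorbs a sufficiently fine rational approximation — this is a standard point but could be flagged). The other detail is the initialization: an arbitrary solution $\mathcal{F}_{\mathrm{init}} \in \mathcal{I}_1 \times \cdots \times \mathcal{I}_T$ can be obtained by running $\textsc{ApxMaxWt}$ once per $t$ with, say, all-ones weights (any member of $\mathcal{I}_t$ suffices), which is polynomial-time.

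I do not anticipate a genuine obstacle here — the lemma is a bookkeeping statement. The only place requiring mild care is making the potential-increase claim airtight: one must confirm that the quantity the while-loop tests is exactly the increment applied to $\varphi$, so that "at least $\frac{\varepsilon n}{8T}$ progress per iteration" is literally correct rather than merely morally so; this follows immediately from Line~\ref{line:update} setting $\mathcal{F} \leftarrow (A_\tau,\mathcal{F}_{-\tau})$ and the loop guard being $\varphi(A_\tau,\mathcal{F}_{-\tau}) - \varphi(\mathcal{F}) \geq \frac{\varepsilon n}{8T}$. With that observation in place, the telescoping bound on iterations combined with the polynomial per-iteration cost completes the proof.
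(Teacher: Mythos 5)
Your proof is correct and follows the same potential-function argument as the paper: $\varphi$ starts nonnegative, is bounded above by $n\log(T{+}1)$, and increases by at least $\tfrac{\varepsilon n}{8T}$ per iteration, yielding $O(nT^2\log T)$ iterations, each of which is polynomial-time given the FPTAS blackbox. The extra remarks about weight nonnegativity, bit-complexity, and initialization are sound bookkeeping but not a departure in method; also note that Claim~\ref{claim:WtPhi} is not actually needed for the iteration bound, since the loop guard already tests the $\varphi$-increment directly, as you yourself observe.
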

\begin{proof}
For any solution $\mathcal{F}$, the coverage values $v_i(\mathcal{F}) \geq 1$, for agents $i \in [n]$. Hence, for the initial solution (arbitrarily) selected by the algorithm, we have $\varphi(\mathcal{F}) = \sum\limits_{i=1}^n \log(v_i(\F)) \geq 0$. 
In addition, since  the coverage values of the agents under any solution are at most $T+1$, the log social welfare $\varphi$ across all solutions is upper bounded by $n \log (T+1)$. Furthermore, note that in every iteration of $\textsc{Alg}$ the log social welfare of the maintained solution increases additively by at least $\frac{\varepsilon n}{8 T}$. These observations imply that the algorithm terminates after $O\left(  n T^2 \log T\right)$ iterations; recall that $\varepsilon = \EpsVal$. Since each iteration executes in polynomial time, the time complexity of the algorithm is polynomial in $n$ and $T$. The lemma stands proved. 
\end{proof}

\subsection{Proof of Theorem \ref{theorem:main}}
\label{section:proof-of-main}
This section establishes the approximation ratio of \textsc{Alg}. For the given fair coverage instance, let $\F = (F_1,\ldots,F_T)$ be the solution returned by \textsc{Alg} and $\F^* = (F^*_1,\ldots,F^*_T)$ be a Nash optimal allocation. Note that $v_i(\F)\geq 1$ and $v_i(\F^*)\leq T+1$, for all agents $i\in [n]$. Hence, for each agent $i \in [n]$, the following bound holds: $v_i(\F) \geq \frac{1}{T+1}v_i(\F^*)$. 

We partition the set of agents $[n]$ considering the multiplicative gap between the coverage values under $\mathcal{F}$ and $\mathcal{F}^*$. Specifically, for each integer $d\in \left\{2,3,\ldots, \lceil \log (T +1) \rceil \right\}$, define the set 
\begin{align*}
X_{2^d} & \coloneqq \left\{i\in [n] : \frac{1}{2^{d+1}}\frac{v_i(\F^*)}{(2.25+\varepsilon)} \leq v_i(\F) < \frac{1}{2^{d}} \frac{v_i(\F^*)}{(2.25+\varepsilon)}\right\}.
\end{align*} 
Furthermore, write $X' \coloneqq [n] \setminus \left( \bigcup\limits_{d=2}^{\lceil \log{(T+1)}\rceil}X_{2^d} \right)$. Since all agents $i$ satisfy $v_i(\F)\geq \frac{1}{T+1}v_i(\F^*)$, the subset $X'$ only contains agents $j \in [n]$ with the property that $v_j(\F) \geq \frac{1}{4}\frac{v_j(\F^*)}{(2.25+\varepsilon)}$. Also, note that the subsets $X_{2^d}$-s and $X'$ form a partition of the set of agents $[n]$; in particular, $|X'|+\sum_{d\geq 2} |X_{2^d}| = n$. 

Recall that $S^\mathcal{F}_\alpha$ denotes the set of $\alpha$-suboptimal agents (see equation (\ref{eqn:sfalpha})). Also, note that, with $\alpha = 2^d$, we have $X_\alpha \subseteq S^\mathcal{F}_\alpha$. Moreover, by the contrapositive of Lemma \ref{lem:algexecution}, for the solution $\F = (F_1,\ldots,F_T)$, returned by \textsc{Alg}, we have  
\begin{align}
\left|X_{2^d} \right| & \leq \left|S^\mathcal{F}_{2^d} \right| \ \leq \frac{n}{2^d} \qquad \ \text{ for all $2\leq d\leq \lceil \log{(T+1)} \rceil$}\label{eq:boundX}
\end{align}

For any subset of agents $Y \subseteq [n]$, write $\rho(Y) \coloneqq \prod_{i\in Y} \frac{v_i(\F)}{v_i(\F^*)}$, if subset $Y \neq \emptyset$. Otherwise, if $Y = \emptyset$, define $\rho(Y) \coloneqq 1$. 
To bound the approximation ratio of the algorithm, we consider
	\begin{align*}
	\frac{\textsc{NSW}(\F)}{\textsc{NSW}{(\F^*)}} &= \left(\rho(X') \prod\limits_{d=2}^{\lceil \log(T+1) \rceil} \rho(X_{2^d})\right)^\frac{1}{n}\\
	&\geq \left(\left(\frac{1}{9+4\varepsilon}\right)^{|X'|}\prod\limits_{d\geq 2}\rho(X_{2^d})\right)^\frac{1}{n} \tag{$v_j(\F) \geq \frac{1}{4(2.25 + \varepsilon)}v_j(\F^*)$ for all $j\in X'$}\\
	&\geq \left(\left(\frac{1}{9+4\varepsilon}\right)^{|X'|}\prod\limits_{d\geq 2}\left(\frac{1}{2^{d+1}(2.25+\varepsilon)}\right)^{|X_{2^d}|}\right)^\frac{1}{n} \tag{$v_i(\F) \geq \frac{1}{2^{d+1}(2.25+\varepsilon)}$ for all $i\in X_{2^d}$}\\
	& =  \frac{1}{9+4\varepsilon}\left(\prod\limits_{d\geq 2}\left(\frac{1}{2^{d-1}}\right)^\frac{|X_{2^d}|}{n}\right) \tag{since $|X'|+\sum_{d\geq 2} |X_{2^d}| = n$}\\
	&\geq \frac{1}{9+4\varepsilon}\left(\prod\limits_{d\geq 2}\left(\frac{1}{2^{d-1}}\right)^\frac{1}{2^d}\right) \tag{via inequality~\eqref{eq:boundX}}\\
	\end{align*}
Claim \ref{prop:dineq} (proved in Appendix \ref{appendix:prop-proofs}) shows that the product $\prod_{d\geq 2} \left(\frac{1}{2^{d-1}}\right)^\frac{1}{2^d} \geq \frac{1}{2}$. Hence, the stated approximation ratio follows
 \begin{align*}
 \frac{\textsc{NSW}(\F)}{\textsc{NSW}{(\F^*)}} \geq \frac{1}{9+4\varepsilon}\left(\prod\limits_{d = 2}^{\lceil \log (T+1) \rceil}\left(\frac{1}{2^{d-1}} \right)^\frac{1}{2^d}\right) \geq \frac{1}{9+4\varepsilon} \cdot \frac{1}{2} = \frac{1}{18 + 8 \varepsilon}.
 \end{align*}

\section{{\rm APX}-Hardness of Fair Coverage}
\label{section:apx-hard}
This section shows that NSW maximization in fair coverage instances is {\rm APX}-hard.  In particular, we prove that there exists an absolute constant $\gamma >1$ such that it is {\rm NP}-hard to approximate the problem within factor $\gamma$. Hence, a constant-factor approximation is the best one can hope for NSW maximization in fair coverage instances, unless ${\rm P} = {\rm NP}$. The hardness result is obtained via an approximation preserving reduction from the following gap version of the maximum coverage problem. \\

\noindent
{\it Maximum $k$-Coverage} \cite{feige1998threshold}:  Given a universe of elements $U = \{1,2,\ldots,n\}$, a threshold $k \in \mathbb{Z}_+$, and a set family $\mathcal{S} = \big\{S_\ell \subseteq [n]\big\}_{\ell=1}^N$, it is {\rm NP}-hard to distinguish between
\begin{itemize}
	\item {\rm YES} Instances: There exists a collection of $k$ subsets in $\mathcal{S}$ that covers all the elements, i.e., the union of the $k$ subsets is equal to $[n]$.
	\item {\rm NO} Instances: Any collection of $k$ subsets from $\mathcal{S}$ covers at most $\left(1-\frac{1}{e}\right)n$ elements, i.e., the union of any $k$ subsets from $\mathcal{S}$ has cardinality at most $\left(1-\frac{1}{e}\right)n$.
\end{itemize}

This hardness result of Feige \cite{feige1998threshold} holds even for instances that satisfy the following properties: (i) all the subsets in $\mathcal{S}$ have the same size $\tau$, i.e., $|S_\ell| = \tau$ for all subsets $S_\ell \in \mathcal{S}$, and (ii) the threshold $k = n/\tau$. Properties (i) and (ii) will be utilized in our approximation preserving reduction.\footnote{The properties also ensure that in the {\rm YES} case there is a collection of $k = \frac{n}{\tau}$ subsets that are pairwise disjoint and they cover all of $[n]$. That is, in the {\rm YES} case there exists a perfect cover.}

The {\rm APX}-hardness result is established next. Notably, this negative result is applicable even for fair coverage instances in which the set families $\mathcal{I}_t$-s are explicitly given as input. 

\begin{theorem}
\label{theorem:apx-hard}
In fair coverage instances, it is {\rm NP}-hard to approximate the maximum Nash social welfare within a factor of $1.092$. 
\end{theorem}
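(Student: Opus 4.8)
The plan is to give an approximation-preserving reduction from the gap version of Maximum $k$-Coverage (with the two extra properties: uniform subset size $\tau$ and $k = n/\tau$, so that YES instances admit a perfect cover). Given such a Maximum $k$-Coverage instance $\langle U = [n], k, \mathcal{S} = \{S_\ell\}_{\ell=1}^N \rangle$, I would build a fair coverage instance on the same agent set $[n]$ with threshold $T = k$ and, for every $t \in [k]$, the same set family $\mathcal{I}_t := \mathcal{S}$. A solution $\mathcal{F} = (F_1, \ldots, F_k)$ then picks $k$ subsets from $\mathcal{S}$ (with repetition allowed), and agent $i$'s coverage value is $v_i(\mathcal{F}) = |\{t : i \in F_t\}| + 1 \geq 1$, with $\sum_{i} (v_i(\mathcal{F}) - 1) = \sum_t |F_t| = k\tau = n$ always. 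This conservation-of-mass identity is the crux: the total "coverage mass" $\sum_i (v_i - 1)$ is fixed at $n$, so NSW is maximized by spreading that mass as evenly as possible, i.e. giving $v_i = 2$ to as many agents as possible — which is exactly what a good cover does.

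The two cases then separate cleanly. In a YES instance there is a perfect cover: $k$ pairwise disjoint sets in $\mathcal{S}$ whose union is $[n]$; taking $\mathcal{F}$ to be these sets gives $v_i(\mathcal{F}) = 2$ for every $i$, so $\textsc{NSW}(\mathcal{F}) = 2$, hence $\textsc{NSW}(\mathcal{F}^*) = 2$ (it cannot exceed $2$ since the mass is $n$ and AM--GM forces the geometric mean of the $v_i$'s, which average $2$, to be at most $2$). In a NO instance, any choice of $k$ sets covers at most $(1 - 1/e)n$ distinct elements; so for any solution $\mathcal{F}$, the number of agents with $v_i(\mathcal{F}) \geq 2$ is at most $(1-1/e)n$, while at least $\frac{n}{e}$ agents have $v_i(\mathcal{F}) = 1$. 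I would then upper bound $\textsc{NSW}(\mathcal{F})$ by maximizing $\prod_i v_i$ subject to: at least $n/e$ of the $v_i$ equal $1$, the remaining at most $(1-1/e)n$ values are $\geq 2$ integers, and $\sum_i (v_i - 1) = n$. By the AM--GM / concavity-of-log argument, the product is maximized by distributing all $n$ units of mass equally among the (at most) $(1-1/e)n$ "covered" agents, giving each $v_i = 1 + \frac{n}{(1-1/e)n} = 1 + \frac{1}{1-1/e} = 1 + \frac{e}{e-1}$. Hence
\begin{align*}
\textsc{NSW}(\mathcal{F}) \ \leq \ \left( 1^{n/e} \cdot \left(1 + \tfrac{e}{e-1}\right)^{(1-1/e)n} \right)^{1/n} \ = \ \left(1 + \tfrac{e}{e-1}\right)^{1 - 1/e}.
\end{align*}
A numerical check gives $1 + \frac{e}{e-1} = \frac{2e-1}{e-1} \approx 2.582$ and $(2.582)^{1-1/e} \approx (2.582)^{0.632} \approx 1.83$, so the NSW gap ratio is at least $2 / 1.83 \approx 1.092$, matching the claimed constant. (The precise constant $1.092$ will come from computing $2 \big/ \left(\tfrac{2e-1}{e-1}\right)^{(e-1)/e}$ exactly.) An $\alpha$-approximation algorithm with $\alpha < 1.092$ would let one distinguish YES from NO instances in polynomial time, contradicting NP-hardness.

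The main obstacle is making the NO-case upper bound fully rigorous: I must argue that among all integer vectors $(v_1, \ldots, v_n)$ with $v_i \geq 1$, $\sum_i(v_i-1) = n$, and at most $(1-1/e)n$ indices with $v_i \geq 2$, the product $\prod_i v_i$ is maximized by the "flat" allocation — and that rounding the non-integer value $1 + \frac{e}{e-1}$ to integers only decreases (or does not increase past the bound on) the product. The clean way is to relax to reals: for fixed support size $s \leq (1-1/e)n$, $\log \prod_i v_i = \sum_{i \in \text{supp}} \log v_i$ is concave and symmetric in the $v_i$ subject to the linear mass constraint, so it is maximized at $v_i = 1 + n/s$ for all $i$ in the support; then one checks this bound is increasing in $s$, so the worst case is $s = (1-1/e)n$. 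Since we only need an \emph{upper} bound on $\textsc{NSW}(\mathcal{F})$ over integer solutions and the real relaxation dominates, integrality is not an obstacle for this direction. I would also remark that the construction uses explicitly-given polynomial-size families $\mathcal{I}_t = \mathcal{S}$, so the hardness holds even in that restricted regime, and note that FPTAS-for-weight-maximization is trivially available here (the families are explicit and polynomial), so the hardness and the algorithmic result apply to overlapping instance classes.
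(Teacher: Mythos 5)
Your proposal is correct and follows essentially the same route as the paper: the same reduction from gap Maximum $k$-Coverage with $\mathcal{I}_t = \mathcal{S}$ and $T = k = n/\tau$, the same YES-case bound of $2$ via a perfect cover, and the same NO-case upper bound $\left(\tfrac{2e-1}{e-1}\right)^{1-1/e} \approx 1.83$ obtained by AM--GM over the covered agents combined with the monotonicity of the function $u \mapsto (1+1/u)^u$ (equivalently the paper's $f(x) = \left(\tfrac{2-x}{1-x}\right)^{1-x}$). Your ``conservation-of-mass'' framing, $\sum_i (v_i-1) = k\tau = n$, is exactly the identity the paper uses when computing $\sum_{j \in L^c} v_j(\mathcal{F}) = n + |L^c|$; your convex-optimization phrasing and the paper's direct AM--GM are the same argument.
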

\begin{proof}
Given an instance of the maximum $k$-coverage problem with universe $U=\{1,2,\ldots,n\}$ and set family $\mathcal{S} =\{S_1,S_2,\ldots,S_N\}$ of $\tau$-sized subsets of $[n]$, we construct a fair coverage instance with $n$ agents and $T = k$. Since threshold $k = \frac{n}{\tau}$, we have $T = \frac{n}{\tau}$. To complete the construction and obtain an instance $\langle [n], T, \{\mathcal{I}_t\}_{t=1}^{T}\rangle$, we set the families $\mathcal{I}_t = \mathcal{S}$, for all $t \in [T]$. 

First, we show that if the underlying maximum coverage instance is a {\rm YES} instance, then the optimal NSW in the constructed fair coverage instance is at least $2$. Note that in the {\rm YES} case there exists a size-$k$ collection  $\mathcal{S}'=\{S'_1, S'_2, \ldots, S'_k \}\subseteq \mathcal{S}$ that covers all of $[n]$. Also, by construction, $T = k$ and $\mathcal{I}_t = \mathcal{S}$ for all $1 \leq t \leq k$. Hence, for each $t \in [T]$, we have $S'_t \in \mathcal{I}_t$. Therefore, the tuple $\mathcal{F}' = (S'_1, S'_2, \ldots, S'_k)$ is a solution under which $v_i(\mathcal{F}') \geq 2$, for all agents $i \in [n]$.\footnote{In fact, for each agent $i$ the coverage value $v_i(\mathcal{F}') =2$, since $i$ is contained in exactly one of the subsets $S'_t$-s. Recall that properties (i) and (ii) ensure that $\mathcal{S}'$ is a perfect cover.}
This bound on the coverage value of the agents implies that in the current case, the optimal Nash social welfare is at least $2$. 

Now, we show that in the {\rm NO} case the optimal NSW is at most $c$, for an absolute constant $c<2$. Here, consider any solution $\F = (F_1,\ldots,F_{T})$ in the constructed fair coverage instance. We have $T = k = \frac{n}{\tau}$ and, by construction, $F_t \in \mathcal{S}$. Furthermore, given that we are in the {\rm NO} case, the collection of subsets $\{F_1, F_2, \ldots, F_T\} \subseteq \mathcal{S}$ covers at most $(1-\frac{1}{e})n$ elements. Let $L$ denote the set of agents not covered by the subsets $F_t$-s and write $\ell \coloneqq |L| \geq \frac{n}{e}$.  Since each agent $i \in L$ is not covered under $\mathcal{F}$, we have $v_i(\mathcal{F})= 1$ for all $i \in L$. Furthermore, note that the agents in the set $L^c \coloneqq [n] \setminus L$ are covered by the $T = k = \frac{n}{\tau}$ subsets $F_1, \ldots, F_T$, and each of these subsets is of size $\tau$. Therefore, 
\begin{align*}
\sum_{j \in L^c}  v_j(\mathcal{F}) & = \sum_{t = 1}^{n/\tau} |F_t| \ + \ |L^c| \\
& = \frac{n}{\tau} \tau \ + \ |L^c| \tag{since $|F_t| = \tau$ for each $t$} \\
& = n + (n-\ell) \tag{$\ell = |L|$}
\end{align*}
Hence, the average social welfare among agents in $L^c$ satisfies $\frac{1}{|L^c|} \sum_{j \in L^c}  v_j(\mathcal{F}) = \frac{2n - \ell}{n - \ell}$. This bound and the AM-GM inequality give us $\prod\limits_{j \in L^c} v_i(\mathcal{F}) \leq \left( \frac{2n - \ell}{n - \ell} \right)^{|L^c|}$.  Therefore, we can bound the Nash social welfare of $\F$ as follows
\begin{align}
\NSW(\F) = \left( \prod\limits_{i \in L} v_i(\F) \ \prod\limits_{j \in L^c} v_j(\F) \right)^{\frac{1}{n}} \leq 1^{\frac{\ell}{n}} \  \left(\frac{2n-\ell}{n-\ell}\right)^{\frac{n-\ell}{n}} = \left(\frac{2 -\ell/n}{1-\ell/n}\right)^{\left( 1-\frac{\ell}{n}\right)} \label{ineq:nswNoCase}
\end{align}
Note that the function $f(x) \coloneqq \left( \frac{2-x}{1-x} \right)^{(1-x)}$ is decreasing in the interval $x \in \left[ \frac{1}{e}, 1 \right)$ (see Claim  \ref{proposition:fndec} in Appendix \ref{appendix:apx-hard}). Hence, using the fact that $\ell \geq \frac{n}{e}$ and inequality (\ref{ineq:nswNoCase}), we get
\begin{align}
\NSW(\F)  \leq \left(\frac{2- 1/e}{1- 1/e}\right)^{1-\frac{1}{e}} \leq 1.83 \label{ineq:nswAllNo}
\end{align}
Since, in the {\rm NO} case, inequality (\ref{ineq:nswAllNo}) holds for all solutions $\mathcal{F}$, we get that the optimal NSW is at most $1.83$. 

Overall, we get that in the {\rm YES} case the optimal NSW is at least $2$ and in the {\rm NO} case it is at most $1.83$. This multiplicative gap of $\frac{2}{1.83} > 1.092$ implies that a $1.092$-approximation algorithm for NSW maximization can be used to distinguish between the two cases. Since this differentiation is {\rm NP}-hard, a $1.092$-approximation is {\rm NP}-hard as well. The theorem stands proved. 
\end{proof}

\section{Conclusion and Future Work}
The current paper extends the scope of coverage problems from combinatorial optimization to fair division. In this setting, we develop algorithmic and hardness results for maximizing the Nash social welfare. The coverage framework considered in this work accommodates expressive combinatorial constraints and, hence, it models a range of applications. The framework also generalizes public decision making among agents that have binary additive valuations. 
 
 It would be interesting to extend the coverage framework to settings in which each agent $i$ has value $v^t_i$ for getting covered by the $t$th selected subset and her valuation is additive across the $T$ selections. Online version of fair coverage is another interesting direction for future work.

\bibliographystyle{alpha} 
\bibliography{references.bib}

\newpage

\appendix
\section{Missing Proofs from Section \ref{section:analysis}}
\label{appendix:prop-proofs}

Here, we restate and prove the claims used in Section \ref{section:analysis}.

\Boundphidiff*
\begin{proof}
For each agent $\ell \in [n] \setminus (X \cup F_t)$, the coverage value remains unchanged between the solutions $(X,\mathcal{F}_{-t})$ and $\mathcal{F}$, i.e., $v_\ell(X,\mathcal{F}_{-t}) = v_\ell(\mathcal{F})$. Similarly, for all $\ell \in X \cap F_t$, we have $v_\ell(X,\mathcal{F}_{-t}) = v_\ell(\mathcal{F})$. Now, considering the change in coverage values for agents in subsets $X \setminus F_t$ and $F_t \setminus X$, respectively, we obtain 
\begin{align} 
	\varphi(X,\mathcal{F}_{-t}) - \varphi(\mathcal{F}) & = \sum_{i\in X\setminus F_t} \left( \log(v_i+1) - \log(v_i) \right) - \sum_{j \in F_t\setminus X} \left(\log(v_j) - \log(v_j-1) \right) \label{eq:phidiff} 
	\end{align}
Note that, for any integer $a\geq 2$, we have $\log(a+1) - \log(a)\leq \log(a) - \log(a-1)$. We instantiate this inequality with $a = v_k$, for all agents $k \in X \cap F_t$, and extend equation (\ref{eq:phidiff}) as follows 
\begin{align}
\varphi(X,\mathcal{F}_{-t}) - \varphi(\mathcal{F}) & \geq \sum_{i\in X\setminus F_t} \left( \log(v_i+1) - \log(v_i) \right) - \sum_{i\in F_t\setminus X} \left(\log(v_i) - \log(v_i-1) \right) \nonumber \\
& \ + \sum_{k\in X \cap F_t} (\log(v_k+1) - \log(v_k)) - (\log(v_k) - \log(v_k-1)) \label{ineq:phidiff}
\end{align}
Furthermore, recall that the natural logarithm satisfies the following bounds: $\log(v +1) - \log v \geq \frac{1}{v +1}$, for all integers $v \geq 1$, and $\log v - \log(v -1) \leq \frac{1}{v-1}$, for all integers $v \geq 2$.

With these bounds, inequality (\ref{ineq:phidiff}) reduces to
\begin{align*}
		\varphi(X,\mathcal{F}_{-t}) - \varphi(\mathcal{F})
		&\geq \sum_{i\in X} \left( \log(v_i+1) - \log(v_i) \right) \ - \  \sum_{j \in F_t} \left( \log(v_j) - \log(v_j -1) \right) \\
		&\geq \sum_{i\in X} \frac{1}{v_i+1} \ - \  \sum_{j \in F_t} \frac{1}{v_j-1}.
\end{align*}
The claim stands proved. 
\end{proof}
\hfill 

Next, we state and prove Claim \ref{prop:epsilonineq}.

\begin{restatable}{claim}{Epsilonineq}\label{prop:epsilonineq}
For parameter $\varepsilon \in (0,1)$ along with any integers $\alpha\geq 4$ and $v\geq 1$,  we have
\begin{align*}
\frac{\alpha (2.25 + \varepsilon) v -1}{v+1} \geq \left(1+\frac{\varepsilon}{2}\right)\alpha.
\end{align*}
\end{restatable}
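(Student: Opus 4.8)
The plan is to prove the inequality
\[
\frac{\alpha (2.25 + \varepsilon) v - 1}{v+1} \geq \left(1 + \frac{\varepsilon}{2}\right) \alpha
\]
by clearing denominators and reorganizing into a statement that is manifestly true for all integers $\alpha \geq 4$, $v \geq 1$, and all $\varepsilon \in (0,1)$. Since $v + 1 > 0$, the inequality is equivalent to
\[
\alpha(2.25 + \varepsilon) v - 1 \geq \left(1 + \tfrac{\varepsilon}{2}\right)\alpha (v+1),
\]
which, after expanding the right-hand side, becomes
\[
\alpha(2.25 + \varepsilon) v - 1 \geq \alpha v + \tfrac{\varepsilon}{2}\alpha v + \alpha + \tfrac{\varepsilon}{2}\alpha.
\]
Collecting the terms involving $\alpha v$ on the left and everything else appropriately, the claim reduces to
\[
\alpha v\left(2.25 + \varepsilon - 1 - \tfrac{\varepsilon}{2}\right) \geq \alpha\left(1 + \tfrac{\varepsilon}{2}\right) + 1,
\quad\text{i.e.,}\quad
\alpha v\left(1.25 + \tfrac{\varepsilon}{2}\right) \geq \alpha\left(1 + \tfrac{\varepsilon}{2}\right) + 1.
\]

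Now I would bound the two sides using the hypotheses. On the left, since $v \geq 1$, we have $\alpha v (1.25 + \tfrac{\varepsilon}{2}) \geq \alpha(1.25 + \tfrac{\varepsilon}{2})$. So it suffices to show
\[
\alpha\left(1.25 + \tfrac{\varepsilon}{2}\right) \geq \alpha\left(1 + \tfrac{\varepsilon}{2}\right) + 1,
\]
which simplifies to $0.25\,\alpha \geq 1$, i.e., $\alpha \geq 4$. This is exactly the hypothesis, so the chain of implications closes. (It is worth noting that the $\varepsilon$-terms cancel cleanly at this last step, so the worst case is genuinely $v = 1$ and $\alpha = 4$, and the constant $2.25$ is precisely what makes $0.25\alpha \geq 1$ tight there.)

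There is no real obstacle here — the only care needed is in the bookkeeping when expanding $(1 + \tfrac{\varepsilon}{2})\alpha(v+1)$ and when using $v \geq 1$ to drop the factor $v$ (which requires the coefficient $1.25 + \tfrac{\varepsilon}{2}$ to be positive, which it plainly is). One should also remark that multiplying through by $v+1$ preserves the inequality direction since $v + 1 \geq 2 > 0$. I would write the proof as a short forward chain: reduce to the polynomial inequality, substitute $v \geq 1$, and invoke $\alpha \geq 4$. The fact that $\varepsilon \in (0,1)$ is barely used — only to keep the various coefficients positive — so the argument is robust.
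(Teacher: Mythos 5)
Your proof is correct and takes essentially the same route as the paper's: both reduce to the worst case $v=1$ (you by clearing the denominator $v+1$ and dropping the factor $v$, the paper by bounding $\tfrac{v}{v+1}\geq\tfrac12$ and $\tfrac1{v+1}\leq\tfrac12$), and both then close with $\alpha\geq 4$. The bookkeeping differs superficially, but the underlying observation that the worst case is $(v,\alpha)=(1,4)$ is identical.
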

\begin{proof}
The left-hand-side of the desired inequality simplifies to 
\begin{align}
\frac{\alpha (2.25 + \varepsilon) v -1}{v+1} &  = \frac{\alpha (2.25 + \varepsilon) v}{v+1} - \frac{1}{v+1} \nonumber \\ 
& =  \alpha \left( 2 + \frac{1}{4} + \varepsilon\right) \frac{v}{v+1} \ - \ \frac{1}{v+1} \label{eq:vgeqone}
\end{align}
Since $v \geq 1$, we have $\frac{v}{v+1} \geq \frac{1}{2}$ and $\frac{1}{v+1} \leq \frac{1}{2}$. Hence, equation (\ref{eq:vgeqone}) reduces to
\begin{align*}
\frac{\alpha (2.25 + \varepsilon) v -1}{v+1} & \geq \alpha \left( 2 + \frac{1}{4} + \varepsilon\right) \frac{1}{2} - \frac{1}{2} \\
& = \alpha \left(1 + \frac{\varepsilon}{2} \right) + \frac{\alpha}{8} - \frac{1}{2} \\
& \geq \alpha \left(1 + \frac{\varepsilon}{2} \right) \tag{since $\alpha \geq 4$}
\end{align*}

This completes the proof. 
\end{proof}
\hfill 

Claim \ref{claim:WtPhi} is restated and proved next. 

\ClaimWtPhi*
\begin{proof}
Write $v_i \coloneqq v_i(\mathcal{F})$ for all agents $i \in [n]$ and note that 
\begin{align*}
\varphi(X,\mathcal{F}_{-t}) - \varphi(\mathcal{F}) = \sum_{i\in X\setminus F_t} \left( \log(v_i+1) - \log(v_i) \right)- \sum_{i\in F_t\setminus X} \left(\log(v_i) - \log(v_i-1)\right).
\end{align*}
Adding and subtracting $\sum\limits_{i\in X \cap F_t} \left(\log(v_i) - \log(v_i-1)\right)$ we get
\begin{align}
\varphi(X,\mathcal{F}_{-t}) - \varphi(\mathcal{F}) = &\sum_{i\in X\setminus F_t} \log(v_i+1) - \log(v_i) +\sum\limits_{i\in X\cap F_t} \log(v_i) - \log(v_i-1) \nonumber \\
& - \sum\limits_{i\in X\cap F_t} \log(v_i) - \log(v_i-1)  - \sum_{i\in F_t\setminus X} \log(v_i) - \log(v_i-1) \label{eq:phidiffapx}
\end{align}
Now, considering the definition of the weights $w^t_i$ (in Lines~\ref{line:weights} and \ref{line:weights1}) and equation (\ref{eq:phidiffapx}), we obtain 
\begin{align*}
\varphi(X,\mathcal{F}_{-t}) - \varphi(\mathcal{F}) = \sum_{i\in X} w^t_i - \sum_{i\in F_t} w^t_i. 
\end{align*}
This equality establishes the claim. 
\end{proof}
\hfill 

The numeric inequality used in the proof of Theorem \ref{theorem:main} is established next. 
\begin{restatable}{claim}{Dinequality}\label{prop:dineq}
For any integer $\ell\geq 2$, we have $\prod\limits_{d=2}^\ell\left(\frac{1}{2^{d-1}}\right)^\frac{1}{2^d}\geq \frac{1}{2}$.
\end{restatable}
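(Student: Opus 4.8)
The idea is to take logarithms and reduce the claim to an elementary bound on a partial sum of an arithmetico-geometric series. Applying $\log$ to the left-hand side of the inequality $\prod_{d=2}^{\ell}\left(\frac{1}{2^{d-1}}\right)^{1/2^{d}} \geq \frac{1}{2}$ and using $\log\!\left(2^{-(d-1)}\right) = -(d-1)\log 2$, the inequality becomes
\[
-\log 2 \ \sum_{d=2}^{\ell} \frac{d-1}{2^{d}} \ \geq \ -\log 2 .
\]
Since $\log 2 > 0$, dividing through by $-\log 2$ (and reversing the inequality) shows that it suffices to prove $\sum_{d=2}^{\ell} \frac{d-1}{2^{d}} \leq 1$ for every integer $\ell \geq 2$. (Equivalently, one may work with $\log_2$ throughout, where $\log_2 2 = 1$ makes the bookkeeping cleaner.)

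\textbf{Main step.} I would establish the exact identity $\sum_{d=2}^{\ell} \frac{d-1}{2^{d}} = 1 - \frac{\ell+1}{2^{\ell}}$ by induction on $\ell$. The base case $\ell = 2$ gives $\frac{1}{4}$ on both sides. For the inductive step, adding the term $\frac{\ell}{2^{\ell+1}}$ to $1 - \frac{\ell+1}{2^{\ell}}$ yields
\[
1 - \frac{2(\ell+1) - \ell}{2^{\ell+1}} \ = \ 1 - \frac{\ell+2}{2^{\ell+1}},
\]
which is exactly the claimed value with $\ell$ replaced by $\ell+1$. From the identity, $\sum_{d=2}^{\ell}\frac{d-1}{2^{d}} = 1 - \frac{\ell+1}{2^{\ell}} < 1$, so the reduced inequality holds (indeed strictly for every finite $\ell$), and retracing the logarithm computation gives $\prod_{d=2}^{\ell}\left(\frac{1}{2^{d-1}}\right)^{1/2^{d}} \geq \frac{1}{2}$, as desired.

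\textbf{Alternative and obstacle.} An alternative to the induction is simply to note that every term $\frac{d-1}{2^{d}}$ is nonnegative, so the partial sum is bounded above by the full series $\sum_{d=2}^{\infty}\frac{d-1}{2^{d}} = \frac{1}{2}\sum_{k\geq 1}\frac{k}{2^{k}} = \frac{1}{2}\cdot 2 = 1$, using the standard identity $\sum_{k\geq 1} k x^{k} = x/(1-x)^{2}$ evaluated at $x = \tfrac12$. There is no genuine obstacle in this claim; the only point deserving a moment's care is the sign reversal when dividing the logarithmic inequality by the negative quantity $-\log 2$ (or, if one prefers, making the series bound with the right strictness so that the final inequality is stated with ``$\geq$'' rather than a strict one).
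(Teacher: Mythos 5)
Your proof is correct and takes essentially the same approach as the paper: both take logarithms to reduce the claim to the bound $\sum_{d=2}^{\ell}\frac{d-1}{2^{d}}\leq 1$, and your ``alternative'' argument (bounding the partial sum by the full series and evaluating $\sum_{d\geq 2}\frac{d-1}{2^d}=1$) is exactly what the paper does. Your primary route via the closed-form $1-\frac{\ell+1}{2^{\ell}}$ is a minor variant that gives a slightly sharper (exact) statement.
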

\begin{proof}
Taking the logarithm (to the base $2$) of the left-hand-side of the desired inequality, we get
\begin{align*}
\sum\limits_{d=2}^\ell \frac{1}{2^d} \ \log\left({\frac{1}{2^{d-1}}}\right) = \sum_{d=2}^\ell \frac{-(d-1)}{2^d}.
\end{align*}
Hence, to prove the claim it suffices to show that 
\begin{align}
\sum_{d=2}^\ell \frac{-(d-1)}{2^d} \geq \log \left( \frac{1}{2} \right)= -1 \label{ineq:logform}
\end{align} 
Towards this, define $S \coloneqq \sum\limits_{d=2}^\infty \frac{-(d-1)}{2^d}$ and note that $\sum_{d=2}^\ell \frac{-(d-1)}{2^d} \geq S$. We will next show that $S = -1$. This will establish inequality (\ref{ineq:logform}) and, hence, also the claim.  

Multiplying $S$ by $2$ gives us 
\begin{align*}
2S = \sum\limits_{d=2}^\infty \frac{-(d-1)}{2^{d-1}} = \sum_{d=1}^\infty \frac{-d}{2^{d}}  = - \sum_{d=1}^\infty \frac{d}{2^{d}} = -2. 
\end{align*}
Therefore, $S = -1$ and the claim follows.  	
\end{proof}

\section{Missing Proof from Section \ref{section:apx-hard}}
\label{appendix:apx-hard}

Here, we establish the proposition used in the proof of Theorem \ref{theorem:apx-hard}. 

\begin{claim}
\label{proposition:fndec}
The function $f(x)= \left(\frac{2-x}{1-x}\right)^{(1-x)}$ is decreasing in the interval $x\in [\frac{1}{e},1)$. 
\end{claim}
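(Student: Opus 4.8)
The statement is a routine single-variable calculus fact: the function $f(x) = \left(\frac{2-x}{1-x}\right)^{1-x}$ is decreasing on $\left[\frac{1}{e},1\right)$. The natural approach is to pass to the logarithm. Write $g(x) \coloneqq \ln f(x) = (1-x)\ln\!\left(\frac{2-x}{1-x}\right)$ and show $g'(x) < 0$ on the interval, which gives the claim since $\ln$ is increasing.

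\textbf{Key steps.} First I would compute $g'(x)$. Differentiating the product, $g'(x) = -\ln\!\left(\frac{2-x}{1-x}\right) + (1-x)\cdot\frac{d}{dx}\ln\!\left(\frac{2-x}{1-x}\right)$. For the second term, note $\ln\!\left(\frac{2-x}{1-x}\right) = \ln(2-x) - \ln(1-x)$, whose derivative is $\frac{-1}{2-x} + \frac{1}{1-x} = \frac{1}{(1-x)(2-x)}$. Hence
\[
g'(x) = -\ln\!\left(\frac{2-x}{1-x}\right) + \frac{1}{2-x}.
\]
So it suffices to prove $\ln\!\left(\frac{2-x}{1-x}\right) > \frac{1}{2-x}$ for $x \in \left[\frac{1}{e},1\right)$. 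I would substitute $u \coloneqq 1-x \in \left(0, 1 - \frac{1}{e}\right]$, turning the target inequality into $\ln\!\left(1 + \frac{1}{u}\right) > \frac{1}{1+u}$. Using the standard bound $\ln(1+t) > \frac{t}{1+t}$ for $t>0$ with $t = \frac{1}{u}$ gives $\ln\!\left(1+\frac{1}{u}\right) > \frac{1/u}{1 + 1/u} = \frac{1}{u+1}$, which is exactly what is needed. Therefore $g'(x) < 0$ throughout $\left[\frac{1}{e}, 1\right)$, so $g$ — and hence $f = e^g$ — is strictly decreasing on that interval.

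\textbf{Main obstacle.} There is no real obstacle here; the only mild care needed is (i) checking that $\frac{2-x}{1-x} > 1$ on the interval so the logarithm and the substitution are legitimate (indeed $2-x > 1-x > 0$ for $x < 1$), and (ii) recalling the correct direction of the elementary estimate $\ln(1+t) > \frac{t}{1+t}$ for $t > 0$, which follows from $\ln(1+t) = \int_0^t \frac{ds}{1+s} > \int_0^t \frac{ds}{1+t} = \frac{t}{1+t}$. Since this inequality is strict for all $t>0$, the conclusion $g'(x) < 0$ holds on the whole half-open interval, and in particular $f$ is decreasing there, which is all that the proof of Theorem~\ref{theorem:apx-hard} requires.
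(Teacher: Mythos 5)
Your proof is correct. It is essentially the same route as the paper's: the paper rewrites $f(x) = \left(1 + \frac{1}{1-x}\right)^{1-x}$, substitutes $z = 1-x$, and invokes the standard fact that $\left(1+\frac{1}{z}\right)^z$ is increasing on $z>0$; you perform the same substitution and rewriting but then actually prove that monotonicity from scratch (by logarithmic differentiation and the estimate $\ln(1+t) > \frac{t}{1+t}$), so yours is a self-contained, slightly more detailed version of the same argument.
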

\begin{proof}
The function can be expressed as $f(x) = \left(\frac{2-x}{1-x}\right)^{(1-x)} = \left( 1 + \frac{1}{1-x}\right)^{(1-x)}$.  We substitute $z = 1-x$ and note that $\left(1 + \frac{1}{z}\right)^z$ is an increasing function of $z > 0$. Therefore, $f(x)$ is decreasing when $x <1$. 
\end{proof}

\section{Inapproximability in the Absence of Smoothing}
\label{appendix:non-smooth}
This section shows that that if each agent's value is equated to exactly the number of times it is covered among the subsets, then one cannot achieve any multiplicative approximation guarantee for Nash social welfare maximization. Recall that, for any solution $\mathcal{F}=(F_1, \ldots, F_T)$, the coverage value of agent $i \in [n]$ is defined as $v_i(\mathcal{F})  \coloneqq | \{t\in [T] : i \in F_t \} | + 1$. In this section, we write $c_i(\mathcal{F}) \coloneqq \left(v_i(\mathcal{F}) - 1\right)$, for all agents $i \in [n]$, i.e., $c_i(\mathcal{F})$ is equal to the number of times agent $i$ is covered under solution $\mathcal{F}$. In addition, let $\NSW^c$ denote the Nash social welfare without the smoothing, $\NSW^c(\F) \coloneqq  \left( \prod_{i=1}^n c_i(\F) \right)^{\frac{1}{n}}$. The theorem below shows that, in the absence of smoothing, the problem of maximizing NSW cannot be multiplicatively approximated.  The result is obtained via a simple reduction from the vertex cover problem and it holds for coverage instances in which the set families $\mathcal{I}_t$-s are polynomially large. 

\begin{theorem}
\label{lem:nomultapx}
Maximizing $\NSW^c$ does not admit any nontrivial multiplicative approximation guarantee, unless {\rm P} = {\rm NP}.
\end{theorem}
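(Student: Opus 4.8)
The plan is to reduce from the (unweighted, minimum) vertex cover problem, whose decision version is \textrm{NP}-hard, and to build a coverage instance in which the optimal $\NSW^c$ is strictly positive if and only if the graph has a vertex cover of a prescribed size; since any nontrivial multiplicative approximation must distinguish positive optimum from zero optimum, such an approximation would decide vertex cover. Concretely, given a graph $G = ([m], E)$ and target $k$, I would let the agent set correspond to the edges $E$ (so $n = |E|$), set $T = k$, and for each $t \in [k]$ let $\mathcal{I}_t$ consist of the $m$ subsets $\{ e \in E : \text{vertex } j \text{ is an endpoint of } e\}$, one for each vertex $j \in [m]$ (so each $\mathcal{I}_t$ has polynomial size $m$, as required). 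Choosing $F_t$ then amounts to picking a vertex $j_t$ to "cover" all edges incident to it; a solution $\mathcal{F} = (F_1,\ldots,F_k)$ covers edge $e$ at least once (i.e. $c_e(\mathcal{F}) \geq 1$) precisely when $e$ is incident to some chosen vertex, and $\prod_e c_e(\mathcal{F}) > 0$ precisely when $\{j_1,\ldots,j_k\}$ is a vertex cover of $G$.

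The main steps, in order: (1) describe the reduction above and check $\mathcal{I}_t$-s are polynomially large and the construction is polynomial-time; (2) \textbf{YES case:} if $G$ has a vertex cover of size at most $k$, pad it arbitrarily to exactly $k$ vertices $j_1,\dots,j_k$, set $F_t$ to the edge-neighborhood set of $j_t$, and observe every edge is covered at least once, so $c_e(\mathcal{F}) \geq 1$ for all $e$, giving $\NSW^c(\mathcal{F}) \geq 1 > 0$; (3) \textbf{NO case:} if $G$ has no vertex cover of size $k$, then for any solution $\mathcal{F}$ the chosen vertex multiset fails to cover some edge $e$, so $c_e(\mathcal{F}) = 0$ and hence $\NSW^c(\mathcal{F}) = 0$, so the optimal $\NSW^c$ is $0$; (4) conclude: a multiplicative $\gamma$-approximation algorithm (for any finite $\gamma$) would output a solution with $\NSW^c \geq \frac{1}{\gamma}\,\mathrm{OPT}$, which is positive in the YES case and necessarily $0$ in the NO case, thereby deciding whether $G$ has a vertex cover of size $k$ in polynomial time, forcing $\textrm{P} = \textrm{NP}$.

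I would also remark that the same obstruction is robust: one can even allow the approximation factor to depend on the input size, since the gap here is between a strictly positive quantity and exact zero, so no finite approximation ratio is achievable. A minor bookkeeping point to handle cleanly is the padding in the YES case — if the minimum cover has size strictly less than $k$ we still need $k$ subsets, which is fine because repeating a vertex (or adding any extra vertices) only increases coverage values and keeps the product positive; and one should note $k < m$ may be assumed without loss of generality (if $k \geq m$ the whole vertex set is available and the instance is trivial).

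The step I expect to require the most care is not conceptually hard but is where the argument could be fumbled: making sure the reduction genuinely produces a coverage instance in the model as defined — in particular that each $F_t$ is required to lie in $\mathcal{I}_t$ and that $\mathcal{I}_t$ as I have defined it (vertex-neighborhood edge sets) is exactly a list of polynomially many subsets of the agent set $[n] = E$, so that the "polynomially large $\mathcal{I}_t$" hypothesis of the theorem is met and no FPTAS/weight-maximization assumption is needed. Everything else is a direct YES/NO gap argument.
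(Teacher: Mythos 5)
Your proposal matches the paper's proof essentially line for line: the same reduction from vertex cover (agents = edges, $T = k$, each $\mathcal{I}_t$ the family of vertex-neighborhood edge sets), the same YES/NO dichotomy ($\NSW^c \geq 1$ versus $\NSW^c = 0$), and the same conclusion that any finite multiplicative approximation would decide vertex cover. The only difference is cosmetic — you make the padding step in the YES case explicit, which the paper handles implicitly by taking a size-exactly-$k$ cover.
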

\begin{proof}
We will establish the theorem via a reduction from the vertex cover problem. In particular, we will show that if there exists a polynomial-time $\gamma$-approximation algorithm for maximizing $\NSW^c$, for any $\gamma < \infty$, then one can solve the {\rm NP}-complete problem of vertex cover in polynomial time.

Recall that an instance of the vertex cover problem consists of a graph $G=(V,E)$ along with a threshold $k \in \mathbb{Z}_+$ and the objective is to determine whether $G$ admits a vertex cover of size at most $k$. Given $G=(V,E)$ and $k$, we construct a coverage instance $\langle [n], T, \{\mathcal{I}_t\}_{t=1}^{T}\rangle$ with $n = |E|$ and $T = k$. That is, we associate an agent with each edge in $E$. Furthermore, for each vertex $v \in V$, we define $E_v \subseteq E$ as the subset of edges that are incident on $v$, i.e., all edges in $E_v$ are covered by vertex $v$. For each $t \in [T]$, the family $\I_t \subseteq 2^{[n]}$ is constructed as follows $\mathcal{I}_t = \left\{ E_v \right\}_{v \in V}$.
We will show that 
\begin{itemize}
\item[({\rm I})] If the given graph $G$ admits a vertex cover of size at most $k$, then in the fair coverage instance there exists a solution $\F$ with $\NSW^c(\F) \geq 1$. 
\item[({\rm II})] Otherwise, if all the vertex covers in $G$ are of size more than $k$, then $\NSW^c(\F)=0$, for all solutions $\F$. 
\end{itemize}
Here, the optimal value of $\NSW^c$ is either at least $1$ or it is $0$. Therefore, any $\gamma$-approximation algorithm (with $\gamma < \infty$) can be used to distinguish between these two cases. That is, using a $\gamma$-approximation algorithm, one can decide whether there is a vertex cover of size at most $k$ or not. This overall shows that it is {\rm NP}-hard to approximate the optimal $\NSW^c$ within any multiplicative factor. To complete the proof we will next establish properties ({\rm I}) and ({\rm II}). 

For ({\rm I}), consider a size-$k$ vertex cover $U \subseteq V$ and populate the size-$k$ tuple $\mathcal{F} = \left(E_u \right)_{u \in U}$, i.e., for each vertex $u \in U$ we include the set of covered edges $E_u$ in the tuple $\mathcal{F}$. By construction, for each vertex $v$ and every $t \in [T]$, we have $E_v \in \mathcal{I}_t$. Hence, $\mathcal{F}$ is a legitimate solution in the constructed coverage instance. Moreover, the fact that $U$ is a  vertex cover implies that every edge $e \in E$ is contained in at least one of the subsets in $\mathcal{F}$. Therefore, for all the agents $e$ (associated with the edges), we have $c_e(\mathcal{F}) \geq 1$ and $\NSW^c(\F) \geq 1$.

For ({\rm II}), assume, towards a contradiction, that there exists a solution $\F=(F_1, F_2,\ldots, F_k)$ with the property that $\NSW^c(\F) > 0$; recall that $T = k$. This bound implies that $c_e(\F) \geq 1$ for every agent $e \in [n]$. Also, note that for each $F_t \in \mathcal{I}_t$ there exists a vertex $u_t \in V$ such that $F_t = E_{u_t}$. Write $U$ to denote the subset of vertices whose incident edge sets appear in $\F$, i.e., $U  \coloneqq \{u_t \}_{t=1}^k$.  Since $c_e(\F) \geq 1$ for all agents $e \in [n]$, we get that $U$ is a vertex cover of cardinality at most $k$. This, however, contradicts the fact that, in the underlying graph $G$, all vertex covers are of size more than $k$. Hence, property ({\rm II}) holds. 

The theorem stands proved.
\end{proof}

\end{document}